\documentclass[preprint,12pt,times,numbers]{elsarticle}

\usepackage{amsmath,amssymb,amsthm,amsfonts,array,latexsym,eucal,mathrsfs}
\usepackage{fourier}
\usepackage{xparse}
\usepackage{cite}
\usepackage{color}
\usepackage{graphicx} 
\usepackage{subfigure}
\usepackage{caption}
\usepackage{tikz}
\usepackage{kbordermatrix}
\usepackage{placeins}

\usepackage{verbatim} 

\usepackage{todonotes} 
\usepackage{footnote}

\usepackage{ifthen}
\newtheorem{Theorem}{Theorem}[section]
\newtheorem{Proposition}[Theorem]{Proposition}
\newtheorem{Lemma}[Theorem]{Lemma}

\theoremstyle{definition}

\newtheorem{Example}[Theorem]{Example}
\newtheorem{Conjecture}[Theorem]{Conjecture}

\theoremstyle{remark}

\newcommand{\newchapter}[3] 
{                                                        
  \chapter[#2]{#3}
  \chaptermark{#1}
  \thispagestyle{myheadings}
}

\setcounter{MaxMatrixCols}{20}


\definecolor{myblue}{rgb}{0,0,.75}				
\definecolor{mylightblue}{rgb}{0,0,1}				
\definecolor{mytealblue}{rgb}{0.25,1,0.75}		
\definecolor{myspringgreen}{rgb}{0.75,1,0.25} 		
\definecolor{myred}{rgb}{1,0,0} 				
\definecolor{mysepia}{rgb}{0.5,0,0}			 	

\newcommand{\Lvert}[1]{\left| #1 \right|}
\newcommand{\Lpar}[1]{\left( #1 \right)}
\newcommand{\Lbra}[1]{\left\{ #1 \right\}}



\newcommand{\eqn}[1]{\begin{equation} #1 \end{equation}}

\newcommand{\bphi}{\boldsymbol \phi}

\newcommand{\blambda}{\boldsymbol \lambda}


\NewDocumentCommand\ev{mo}{
  \IfNoValueTF{#2}
    {
      \IfNoValueTF{#1}
        {{\bphi}}
        {{\bphi_{#1}}}
    }
    {
      \IfNoValueTF{#1}
        {{\bphi(#2)}}
        {{\bphi_{#1}(#2)}}
    }
}


\NewDocumentCommand\bew{mo}{
  \IfNoValueTF{#2}
    {
      \IfNoValueTF{#1}
        {{\blambda}}
        {{\blambda_{#1}}}
    }
    {
      \IfNoValueTF{#1}
        {{\blambda(#2)}}
        {{\blambda_{#1}(#2)}}
    }
}

\newcommand{\bx}{\boldsymbol x}
\newcommand{\by}{\boldsymbol y}

\newcommand{\br}{\boldsymbol r}
\newcommand{\bxew}{\bx_{_\ew}}

\newcommand{\Rf}{{\mathbb R}}

\newcommand{\Q}{{\mathbb Q}}


\newcommand{\define}{{\, := \,}}
\newcommand{\cond}{\,|\,}










\newcommand{\vo}{V_1}
\newcommand{\vto}{V_{2 \sim 1}}
\newcommand{\vi}{V_I}
\newcommand{\voi}{\vo^I}
\newcommand{\vtoi}{\vto^I}

\newcommand{\tvi}{\tau_{_{\vi}}}

\newcommand{\dv}[1]{d_{v_{#1}}}

\newcommand{\lm}{\ew_-}
\newcommand{\lp}{\ew_+}
\newcommand{\ew}{{\lambda}}

\newcommand{\zv}{{\bf 0}}

\newcommand{\zvcv}[1]{\zv_{2c(v_{#1})}}


\DeclareMathOperator*{\diag}{diag}

\newcommand{\transp}{{\scriptscriptstyle{\mathsf{T}}}}

\usepackage{natbib}
\usepackage{version}
\usepackage{algorithm,algorithmic}
\excludeversion{hide}


\begin{document}

\begin{frontmatter}

\title{Tree simplification and the `plateaux' phenomenon of graph Laplacian eigenvalues}
\author[ucdmath]{Naoki Saito\corref{saitocor}}
\ead{saito@math.ucdavis.edu}
\cortext[saitocor]{Corresponding author}
\author[ucdmath]{Ernest Woei\fnref{woei}}
\ead{woei@math.ucdavis.edu}
\fntext[woei]{Currently at FlashFoto, Inc., Los Gatos, CA 95030, USA.}
\address[ucdmath]{Department of Mathematics, University of California, Davis, CA 95616, USA}

\begin{abstract}
We developed a procedure of reducing the number of vertices and edges of a given
tree, which we call the ``tree simplification procedure,'' without changing
its topological information.
Our motivation for developing this procedure was to reduce computational
costs of graph Laplacian eigenvalues of such trees.
When we applied this procedure to a set of trees representing dendritic 
structures of retinal ganglion cells of a mouse and computed their graph
Laplacian eigenvalues, we observed two ``plateaux'' (i.e., two sets of multiple
eigenvalues) in the eigenvalue distribution of each such simplified tree.
In this article, after describing our tree simplification procedure, 
we analyze why such eigenvalue plateaux occur in a simplified tree, and 
explain such plateaux can occur in a more general graph if it satisfies a 
certain condition, identify these two eigenvalues specifically as well as
the lower bound to their multiplicity.
\end{abstract}


\begin{keyword}
vertex reduction; graph Laplacian eigenvalues; 
eigenvalue multiplicity; monic polynomials with integer coefficients

\MSC 05C07 \sep 05C50 \sep 15A42 \sep 65F15
\end{keyword}

\end{frontmatter}

\section{Introduction}
\label{sec:intro}

In order to characterize and cluster dendritic trees of retinal ganglion cells 
(RGCs) of a mouse, our previous work \citep{SAITO-ALL-2009} illustrated the use 
of graph Laplacian eigenvalues rather than using the morphological features 
derived manually from those dendritic trees.
We note that each dendritic tree was literarily represented by a tree
in the sense of graph theory.
Furthermore, in \citep{SAITO-ALL-2011, NAKATSUKASA-ALL-2013}, we provided our
theoretical understanding of the peculiar eigenvalue/eigenvector phase 
transition phenomenon we observed on each of our dendritic trees.

Continuing on our road to characterizing dendritic trees, once more we observed
a new eigenvalue phenomenon on our simplified (or vertex-subsampled) dendritic
trees.
Discovery of this phenomenon has the following history.
Each of the dendritic trees first analyzed in \citep{SAITO-ALL-2009} has
a large number of vertices (ranging from 565 to 24474 depending on the RGCs)
since they represent dense spatial sample points traced along the actual 
dendritic arbors in the 3D images measured by a confocal
microscope using specialized segmentation software operated by our 
neuroscience collaborators; see \citep{COOMBS-ALL-2006} for the details.  
If we are only concerned about their \emph{topological} properties (e.g.,
connectivities of various branches, bifurcation patterns, etc.) but not
their \emph{geometric aspects} (e.g., branch lengths, branch angles, etc.), 
then it is unnecessary to keep most of the vertices of degree 2 since
they do not alter topology of the trees.
Section~\ref{sec:tree-simp} describes our procedure to eliminate such
vertices, which we call a ``tree simplification procedure.''
After simplifying all these trees, we once again started analyzing their
graph Laplacian eigenvalues.  Interestingly enough, the phase transition
phenomenon we observed in the original trees disappeared completely.
Instead, we observed a new phenomenon in the eigenvalue distributions of
such simplified trees.  For each such simplified tree, the eigenvalue
distribution has two ``plateaux,'' i.e., a pair of distinct eigenvalues 
having the same multiplicity greater than one as shown in
Figures~\ref{fig:rgc102simpewplot} and~\ref{fig:rgc108simpewplot}.
The purpose of this article is to describe this phenomenon and provide a 
theoretical explanation of this phenomenon.

The organization of this article is the following.  
Section~\ref{sec:nottools} sets up our notation for this article and
defines some basic quantities.
Then, Section~\ref{sec:tree-simp} describes our tree simplification procedure 
and illustrates several examples.
Section~\ref{sec:ewmult} describes our main result along with its theoretical
consequences.
Finally, we conclude in Section~\ref{sec:conc} with discussion and state a
conjecture on more general situations.

\section{Notation and Definitions}
\label{sec:nottools}
In this article, we use the standard graph theory notation.
We mostly follow the commonly used notation; see e.g., \citep{GODSIL-ROYLE}.
Let $G=(V,E)$ be a graph where $V = V(G) = \{ v_1, v_2, \ldots, v_n \}$ is a 
\emph{vertex set} of $G$ and $E = E(G) = \{ e_1, e_2, \ldots, e_m \}$ is 
its \emph{edge set} where $e_k$ connects two vertices $v_i, v_j$ for some 
$1 \leq i \neq j \leq n$.  $\Lvert{V} = n$ is also referred to as the 
\emph{order} of $V$, where $\Lvert{\cdot}$ denotes a cardinality of a set.
We only deal with finite $n$ and $m$ in this article.
A \emph{subgraph} $H$ of a graph $G$ is a graph such that $V(H) \subseteq V(G)$
and $E(H) \subseteq E(G)$.
An edge connecting a vertex $v_i \in V$ and itself is called a \emph{loop}.
If there exist more than one edge connecting some $v_i, v_j \in V$,
then they are called \emph{multiple edges}.
A graph having loops or multiple edges is called a \emph{multiple graph}
(or \emph{multigraph}); otherwise it is called a \emph{simple} graph.

An edge $e \in E$ connecting two distinct vertices $v_i, v_j \in V$ may 
or may not have a direction.
If $e$ is an \emph{undirected} edge between $v_i$ and $v_j$, 
then we write $e=v_iv_j$, and $v_i, v_j$ are called the \emph{endpoints} of 
$e$. We also say an edge $e=v_iv_j$ is \emph{incident with} $v_i$ and $v_j$,
and $e$ \emph{joins} $v_i$ and $v_j$.
If $e=v_iv_j$, then $v_i, v_j$ are said to be \emph{adjacent} and we write 
$v_i \sim v_j$.
Let $N(u)$ be a set of \emph{neighbors} of vertex $u$, i.e., 
$N(u) \define \Lbra{v \in V \vert u \sim v}$.
An \emph{undirected graph} is a graph where none of its edges has a direction. 

If each edge $e \in E$ has a \emph{weight} (normally positive),
written as $w_e$, then $G$ is called a \emph{weighted} graph. 
$G$ is said to be \emph{unweighted} if $w_e \equiv 1$ for each $e \in E$.

A \emph{path} from $v_i$ to $v_j$ in a graph $G$ is a subgraph of $G$ 
consisting of a sequence of distinct vertices starting with $v_i$ and ending 
with $v_j$ such that consecutive vertices are adjacent.  
We say the \emph{length} (or \emph{cost}) $\ell(P)$ of a path $P$ is the
sum of its corresponding edge weights, i.e.,
$\ell(P) \define \sum_{e \in E(P)} w_e$.  
For any two vertices in $V$, if there is a path connecting them,
then such a graph is said to be \emph{connected}.
In this article, we only deal with \emph{simple, undirected, connected, and 
unweighted} graphs.

A path starting from $v_i$ that returns to $v_i$ (but is not a loop)
is called a \emph{cycle}.
A \emph{tree} is a connected graph without cycles, and
is often denoted by $T$ instead of $G$.  For a tree $T$,
we have $| E(T) | = | V(T) | -1$.
A \emph{rooted tree} $T(V,E)$ is a tree with a vertex labeled as the \emph{root} 
vertex, $v_r \in V$.  A graph $G'(V',E')$ \emph{spans} $G(V,E)$, if $G'$ is a 
tree with $V'=V$ and $E' \subseteq E$.  
Such $G'(V',E')$ is also called a \emph{spanning tree} of $G(V,E)$.

The \emph{Laplacian matrix} of $G$ is $L(G) \define D(G)-A(G)$ where 
$D(G) \define \diag\Lpar{d_{v_1},\dots,d_{v_n}}$ and $A(G)=\Lpar{a_{ij}}$
are the \emph{degree matrix} and the \emph{adjacency matrix} of $G$, 
respectively.  The entries of the latter are defined as
$$
a_{ij}= \begin{cases}
1 & \text{if $v_i \sim v_j$}; \\
0 & \text{otherwise}.
\end{cases}
$$
The \emph{degree} of vertex $v_i$ of $G$ is defined as
$d(v_i) = d_{v_i} \define \sum_{j=1}^n a_{ij}$, i.e., the $i$th row sum of $A(G)$.
A vertex $v \in V(G)$ is referred to as a \emph{leaf} or a \emph{pendant vertex}
if $d_v = 1$.

Let $0 = \ew_0 \leq \ew_1 \leq \cdots \leq \ew_{n-1}$ be the sorted eigenvalues
of $L(G)$.  Let $m_G(\ew)$ denote the multiplicity of the eigenvalue $\ew$ of 
$L(G)$, and let $m_G(I)$ be the number of eigenvalues of $L(G)$, multiplicities 
included, that belong to $I$, an interval of the real line.

In this article, we refer to a vertex of degree 2 in a given rooted tree that is
not a root vertex as a \emph{trivial} vertex.  Hence, a \emph{nontrivial} vertex 
means that its degree is other than 2 or it is a root vertex.  
For a graph $G(V,E)$, a pair of nontrivial vertices, say, $(v_{i_1}, v_{i_k}) \in V \times V$, is referred to as a \emph{neighboring pair of nontrivial vertices}
if there is a path between them with a vertex sequence 
$\Lpar{v_{i_1}, v_{i_2}, \dots, v_{i_k}}$
where the intermediate vertices $v_{i_2}, \ldots, v_{i_{k-1}}$ are all trivial.

A \emph{starlike} tree is a tree which has exactly one vertex of degree greater
than 2.  Let $S\Lpar{n_1,n_2,\dots,n_k}$ be a starlike tree that has $k(\geq 3)$
paths (i.e., branches) emanating from the central vertex $v_1$ with $d_{v_1}=k$.  
Let the $i$th branch have $n_i$ vertices excluding $v_1$.  
Let $n_1 \geq n_2 \geq \cdots \geq n_k$. Then, $n=\Lvert{V(S(n_1, \ldots, n_k))}=
n=1+\sum_{i=1}^k n_i$.

A \emph{bipartite graph} $G(V,E)$ is a graph such that the vertex set $V$ can be
partitioned into two disjoint sets, $U'$ and $V'$, i.e., $U' \cup V'=V$ and 
$U' \cap V' = \emptyset$, and for every pair of vertices in each of the disjoint
sets there is no edge that connects the pair of vertices, i.e., 
for all $u, u' \in U'$, we have $u \not\sim u'$ and $uu' \notin E$, 
similarly for $V'$.  Additionally, for each $u \in U'$, there exist a $v \in V'$ 
such that $uv \in E$.

\section{Tree Simplification}
\label{sec:tree-simp}
As mentioned earlier, we only deal with \emph{simple, connected, undirected, and 
unweighted} graphs in this article as we did in our previous works 
\citep{SAITO-ALL-2009, SAITO-ALL-2011, NAKATSUKASA-ALL-2013}.  In other words, 
we focus on the topological aspects of graphs rather than geometrical aspects.  
For the latter, we refer the readers to \citep[Chap.~7]{WOEI-2012} as
well as our ongoing work \citep{SAITO-ALL-2013-CLUSTTREES}.
We also assume that each tree we deal with in this article is a rooted tree
whose root vertex corresponds to the so-called ``soma'' (a.k.a.\ cell body)
if that tree represents an actual neuronal dendritic tree.
Once we decide to restrict our attention to the topological aspects of trees, 
then it seems obvious that all trivial vertices defined earlier can be removed 
without altering the topology and connectivities.  
Removal of such trivial vertices certainly saves subsequent computations
of eigenvalues and eigenvectors of the resulting Laplacian matrices.

Can we really remove all trivial vertices in an original tree?  
If we dealt with topological and geometrical aspects of trees by using weighted
trees with edge weight representing the Euclidean distance between the associated
pair of vertices, then the answer would be `Yes'. 
However, the answer is in fact `No' for unweighted trees of our interest.
This is due to the existence of the so-called \emph{spines}\footnote{
Spines are small membranous protrusion along a neuron's dendrite. 
They typically receive input from a single synapse of another neuron's axon. 
They also serve as a storage site for synaptic strength and help transmit 
electrical signals to the neuron's cell body \citep{STUART-ALL-2008}. 
Therefore, spines are a very important feature of dendrites.}
in our dendritic trees.  Each spine is represented by a \emph{pendant edge}
(i.e., an edge connecting a leaf and one of the intermediate vertices in a path);
see, e.g., Figure~\ref{fig:rgc60orivssimp}.

We want to distinguish spines from longer paths in our resulting simplified 
trees.  To do so, we need to do the following:
for each pair of \emph{neighboring} nontrivial vertices in an original tree, 
check the length of its associated path; if it is greater than one (i.e., 
a non-spine path), then we remove all those intermediate trivial vertices but 
one in the middle of the path, which results in the path of length two.
Note that this tree simplification procedure keeps all the spines intact.

Let us briefly illustrate the benefits we obtain from this tree simplification 
procedure on our dataset by displaying a histogram of an agglomeration of 
all the vertices and their degrees over all dendritic trees in 
Figure~\ref{subfig:orirgcaggdeghist}.  There we can see that the number of
trivial vertices outnumber the number of nontrivial vertices.  
Most trivial vertices in our original dendritic trees lie on paths between 
neighboring nontrivial vertices. 
Figure~\ref{subfig:od2noprgcaggdeghist} shows the degree distribution
of the simplified dendritic trees after our tree simplification procedure
is applied.  We can see that the number of degree 2 vertices in the simplified
trees are comparable with the nontrivial vertices with degree 1 and degree 3.
See also Table~\ref{tab:orisimpstats} for more quantitative information on
our tree simplification procedure applied to our dendritic trees.
\begin{figure}
\centering
\subfigure[]{
\includegraphics[width=.45\textwidth]{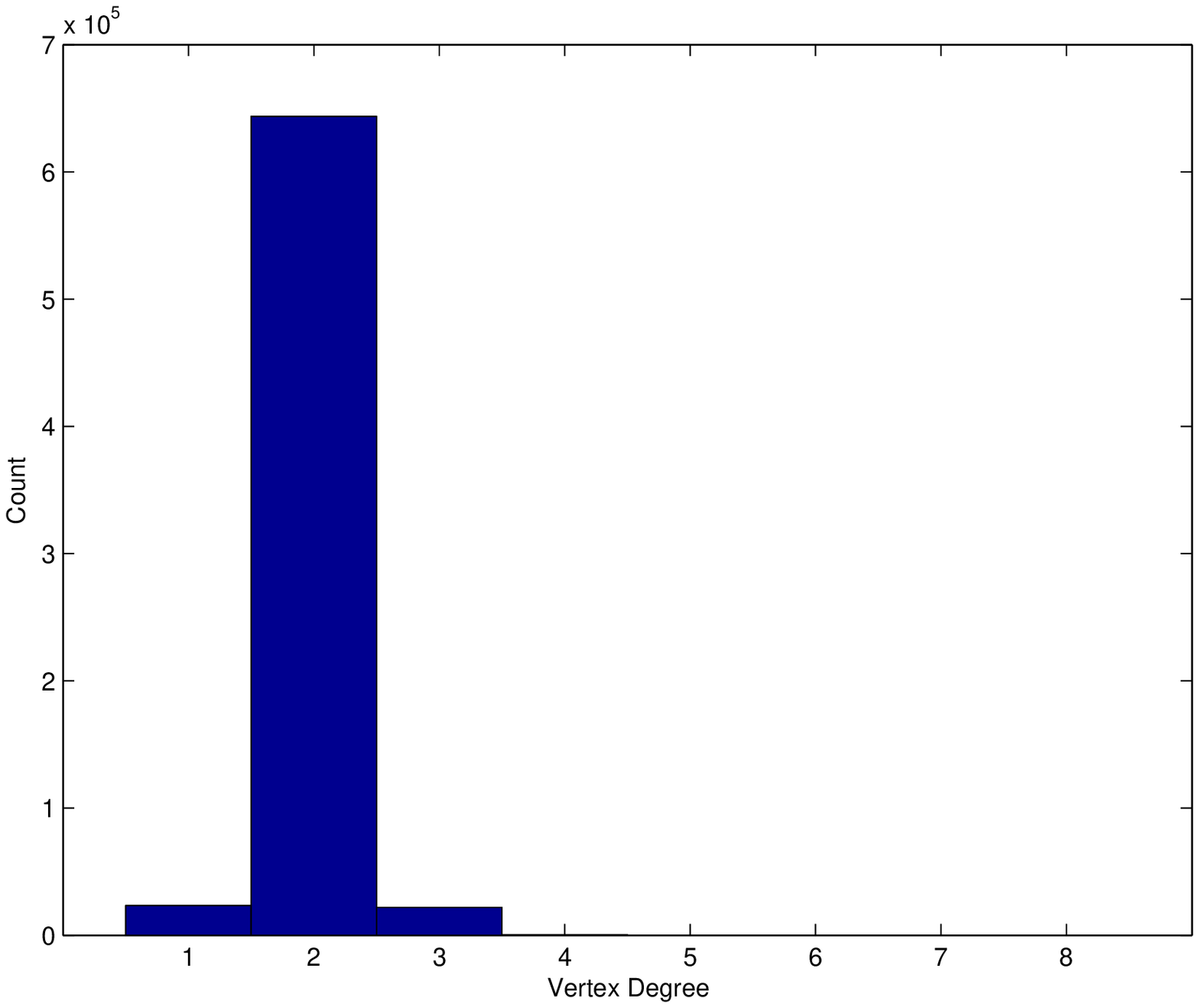}
\label{subfig:orirgcaggdeghist}
}
\subfigure[]{
\includegraphics[width=.45\textwidth]{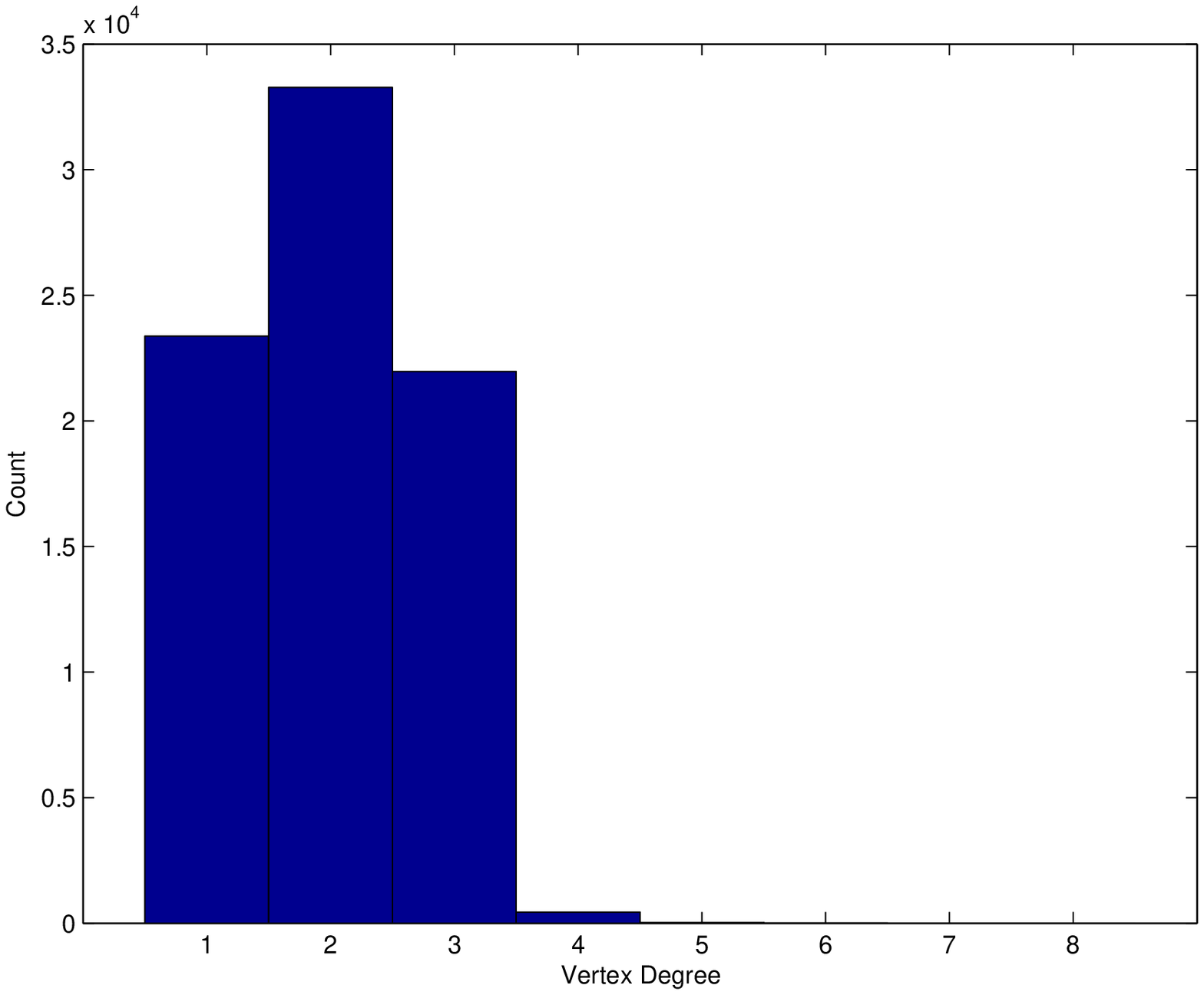}
\label{subfig:od2noprgcaggdeghist}
}
\caption{Histogram of an agglomeration of all vertices and their degrees for (a) unmodified dendritic trees; (b) simplified dendritic trees with one trivial vertex on each non-spine path. Note that the scale of the vertical axis in (a) is 
different from that in (b).}
\label{fig:aggdeghist}
\end{figure}

Let us now describe our tree simplification procedure in detail.
Let $T(V,E)$ be a rooted tree.  
Assume $V$ is an ordered list of vertices that are labeled as 
$v_i$ for $i=1,\dots,n=|V|$. 
Without loss of generality, let us assume $v_1$ is the labeled root vertex.
Our procedure starts by examining vertex $v_1$, then $v_2$, and so on till we
examine the last vertex $v_n$.
\begin{hide}
\begin{description}
\item[Step 0:] Set $i=0$.
\item[Step 1:] If $i > n$, then terminate; else set $i=i+1$ and $v=v_i$.
\item[Step 2:] If $v$ is $v_1$ or $v \notin V$, then goto \textbf{Step 1}.
\item[Step 3:] If $d_v \neq 2$, then goto \textbf{Step 1}.
\item[Step 4:] If $d_v=2$, then $v$ is adjacent to two vertices.
Let $u$ and $w$ be these two vertices.  
If $u$ or $w$ is $v_1$, then goto \textbf{Step 1}.
\item[Step 5:] If $d_u = 2$, then `coalesce' $u$ and $v$
(i.e., delete the edges $(v,u)$, $(v,w)$ from $E$;
add a new edge $(u,w)$ to $E$; and delete $v$ from $V$),
and goto \textbf{Step 1}.
\item[Step 6:] If $d_w = 2$, then follow \textbf{Step 5} with $w$ instead of $u$.
\item[Step 7:] Go to \textbf{Step 1}.
\end{description}
\end{hide}
\begin{algorithm}[H]
\caption{Tree Simplification}
\label{alg:tree-simp}
\begin{algorithmic}[1]
\STATE Set $i=0$.

\STATE If $i > n$, then terminate; else set $i=i+1$ and $v=v_i$.

\STATE If $v$ is $v_1$ or $v \notin V$, then goto \textbf{Step 2}.

\STATE If $d_v \neq 2$, then goto \textbf{Step 2}.

\STATE If $d_v=2$, then $v$ is adjacent to two vertices.
Let $u$ and $w$ be these two vertices.  
If $u$ or $w$ is $v_1$, then goto \textbf{Step 2}.

\STATE If $d_u = 2$, then `coalesce' $u$ and $v$
(i.e., delete the edges $(v,u)$, $(v,w)$ from $E$;
add a new edge $(u,w)$ to $E$; and delete $v$ from $V$),
and goto \textbf{Step 2}.

\STATE If $d_w = 2$, then follow \textbf{Step 6} with $w$ instead of $u$.

\STATE Go to \textbf{Step 2}.

\end{algorithmic}
\end{algorithm}
A few remarks are in order.
First, note that from \textbf{Step 3} to \textbf{Step 7}, we would coalesce 
\emph{at most} one pair of vertices.
%
Second, upon the completion of the above procedure, the pair of sets $(V,E)$
form the desired simplified tree.
The number of nontrivial vertices in the simplified tree is preserved by
this simplification procedure as can be seen from \textbf{Step 3} and 
\textbf{Step 4}.

Now let us illustrate our tree simplification procedure by applying it to 
very simple two trees in Figure~\ref{fig:p6top4} and \ref{fig:p5top5}. 
Figure~\ref{fig:p6top4} shows a simplification of tree $P_6$ with root vertex 
$v_r$ to $P_4$.  
In Figure~\ref{fig:p5top5}, the tree simplification procedure keeps $P_5$ 
intact since $v_r$ is adjacent to two trivial
vertices, hence we ``go back" to \textbf{Step 2} in \textbf{Steps 3} and 
\textbf{5}.

Figures~\ref{fig:rgc100orivssimp} and~\ref{fig:rgc60orivssimp} illustrate the
simplified trees of a few dendritic trees in our dataset.  
Table~\ref{tab:orisimpstats} provides information about how
much we simplified (i.e., subsampled) our original dendritic trees.
From this table, we see that on average, approximately 84\% of vertices
of our dendritic trees are removed by our procedure.
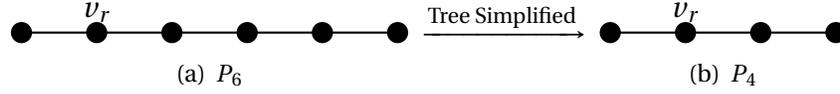
\begin{figure}
\centering
\subfigure[$P_6$]{
\begin{tikzpicture}[style=thick]
  \draw (0,0) -- (5,0);
  \foreach \x in {0,1,...,5} { \fill[black] (\x,0) circle (4pt); }
  \draw (1,0) node [above] {$v_r$};
\end{tikzpicture}
}
$\xrightarrow{\text{Tree Simplified}}$
\subfigure[$P_4$]{
\begin{tikzpicture}[style=thick]
  \draw (0,0) -- (3,0);
  \foreach \x in {0,1,...,3} { \fill[black] (\x,0) circle (4pt); }
  \draw (1,0) node [above] {$v_r$};
\end{tikzpicture}
}
\caption{$P_6$ simplified to $P_4$.}
\label{fig:p6top4}
\end{figure}%
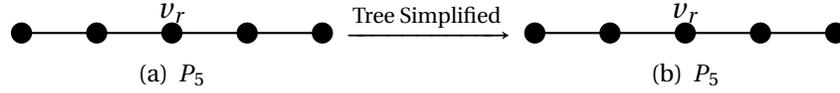
\begin{figure}
\centering
\subfigure[$P_5$]{
\begin{tikzpicture}[style=thick]
  \draw (0,0) -- (4,0);
  \foreach \x in {0,1,...,4} { \fill[black] (\x,0) circle (4pt); }
  \draw (2,0) node [above] {$v_r$};
\end{tikzpicture}
}
$\xrightarrow{\text{Tree Simplified}}$
\subfigure[$P_5$]{
\begin{tikzpicture}[style=thick]
  \draw (0,0) -- (4,0);
  \foreach \x in {0,1,...,4} { \fill[black] (\x,0) circle (4pt); }
  \draw (2,0) node [above] {$v_r$};
\end{tikzpicture}
}
\caption{$P_5$ with center root vertex simplified is still $P_5$.}
\label{fig:p5top5}
\end{figure}%
\begin{figure}
\begin{center}
\subfigure[]{
\includegraphics[width=.375\textwidth]{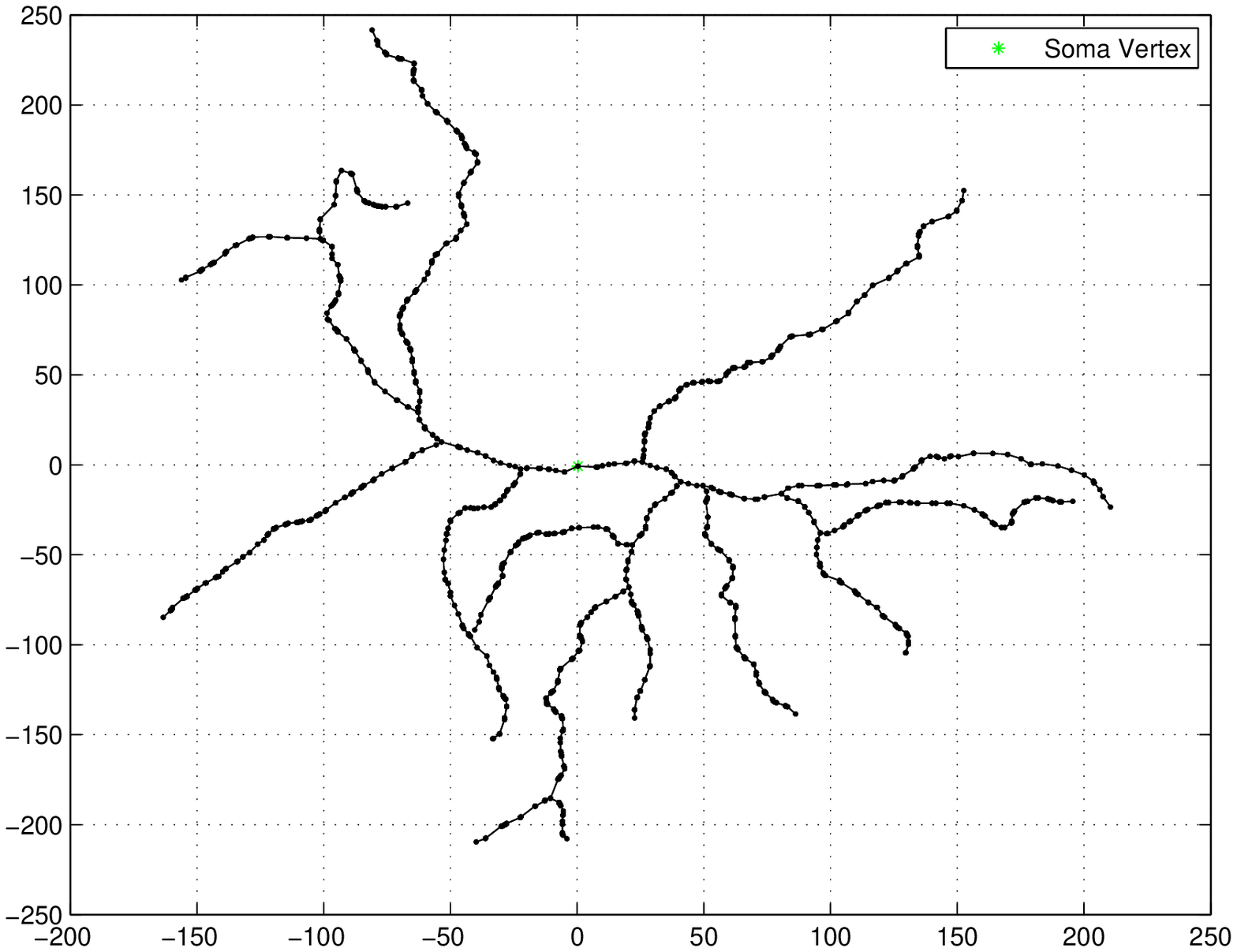}
\label{subfig:rgc100ori}
}
$\xrightarrow{\text{Tree Simplified}}$
\subfigure[]{
\includegraphics[width=.375\textwidth]{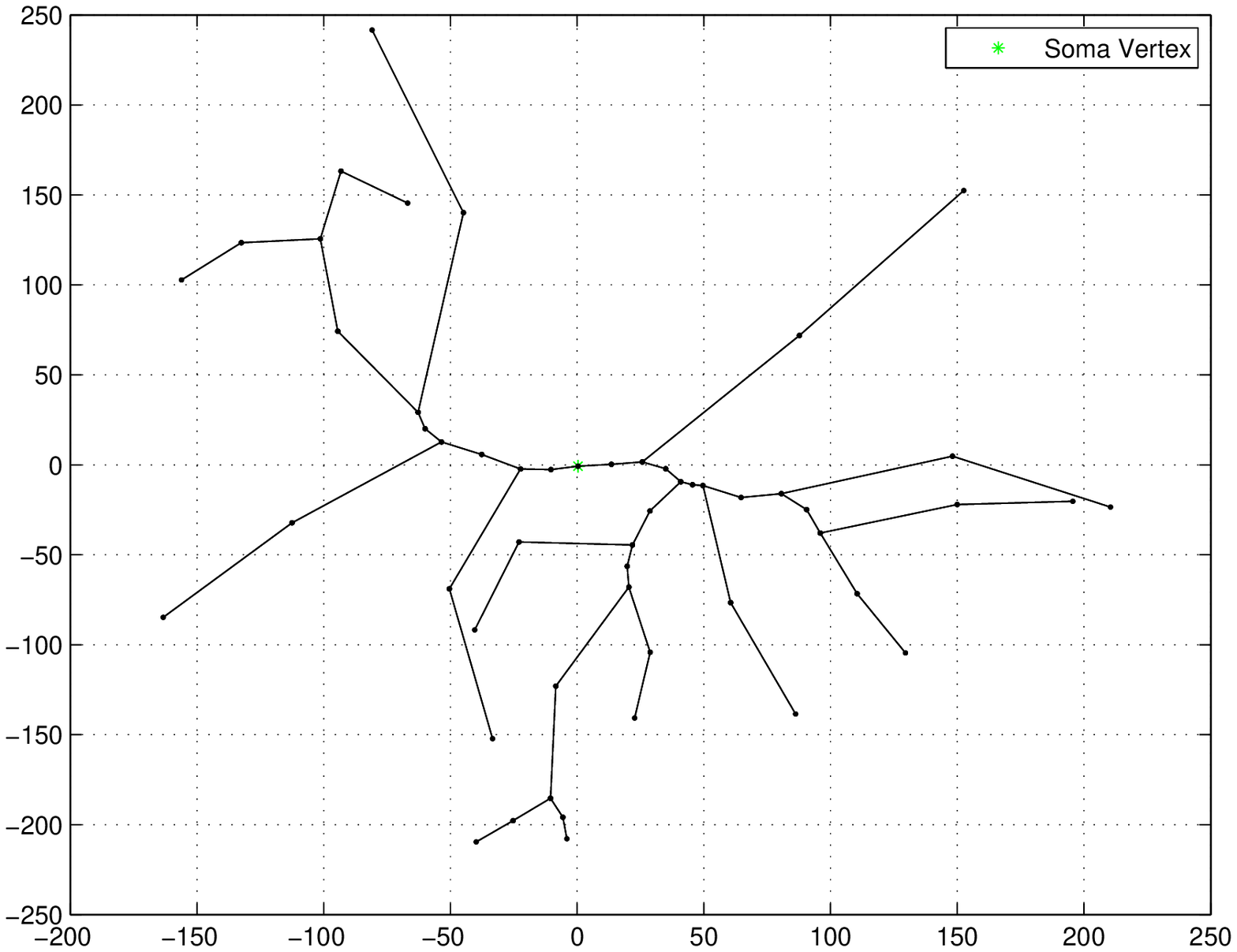}
\label{subfig:rgc100simp}
}
\\
\subfigure[]{
\includegraphics[width=.95\textwidth]{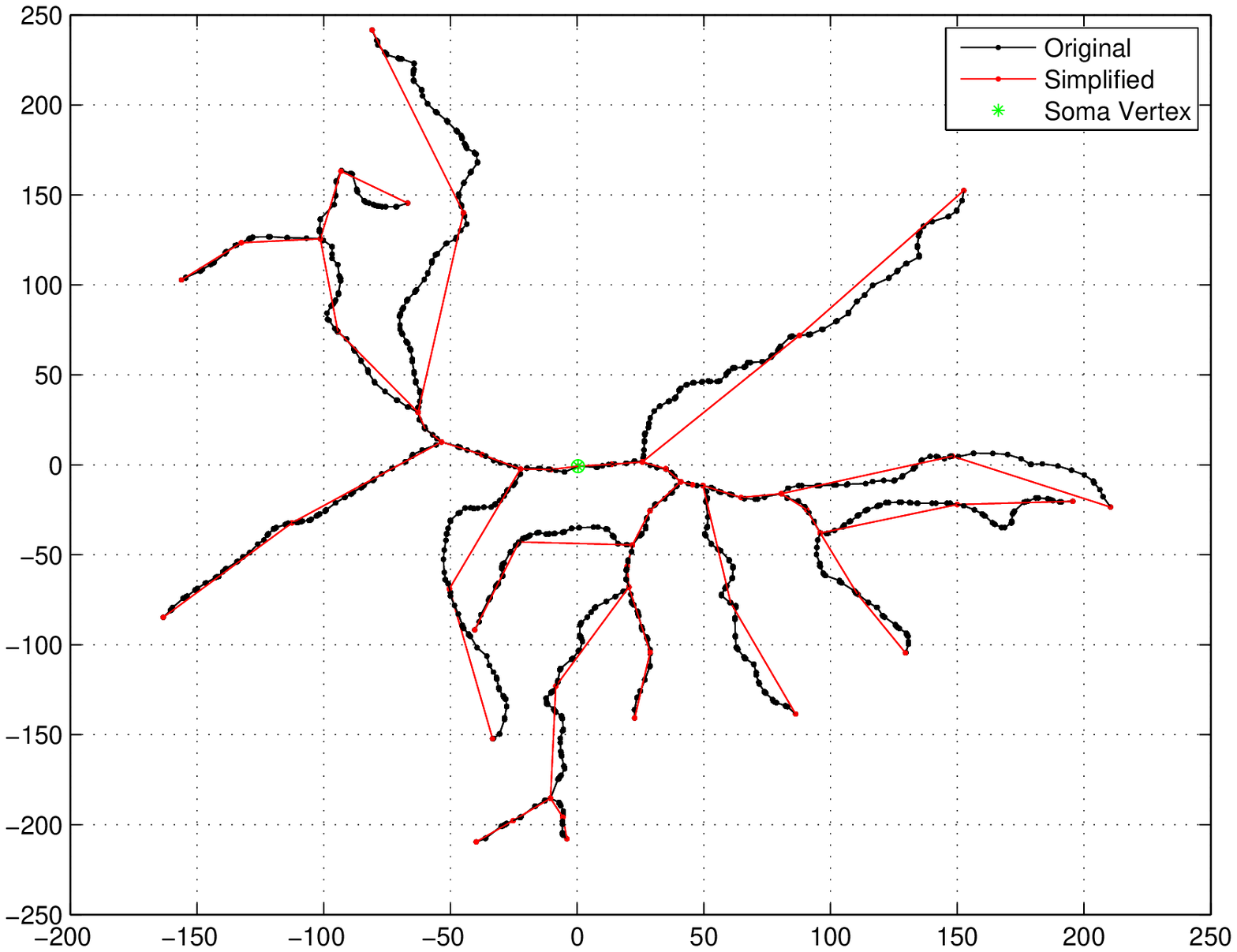}
\label{subfig:rgc100orisimpoverlay}
}
\end{center}
\caption{RGC \#100 from Cluster 6.  (a) Original, 1,154 vertices.  (b) Simplified, 53 vertices. (c) Overlayed.  Approximately 95\% vertices reduction from the original tree to the simplified tree.  Note that these plots are 2D projections of
the 3D dendritic structures since each original vertex has a 3D spatial coordinate.
The units of the horizontal and vertical axes are in $\mu \text{m}=10^{-6}$meter.}
\label{fig:rgc100orivssimp}
\end{figure}%
\begin{figure}
\begin{center}
\subfigure[]{
\includegraphics[height=.425\textheight]{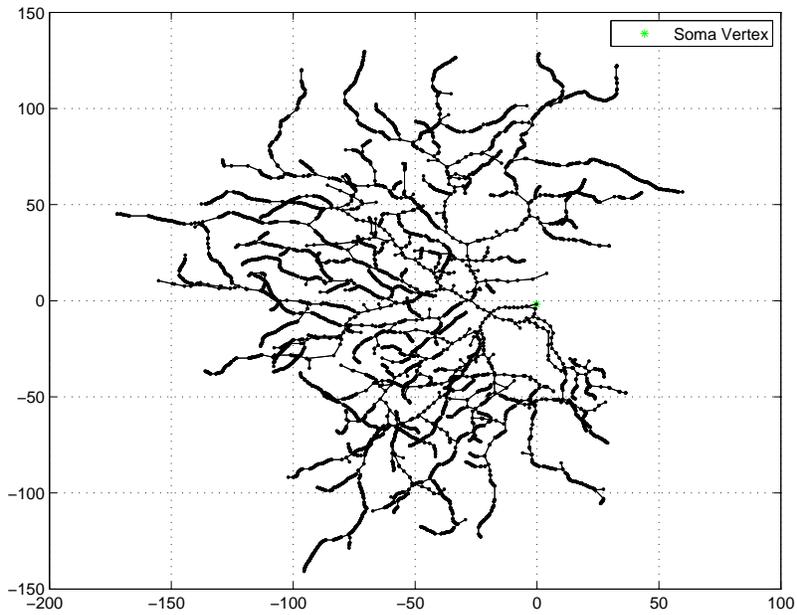}
\label{subfig:rgc60ori}
} \\
\subfigure[]{
\includegraphics[height=.425\textheight]{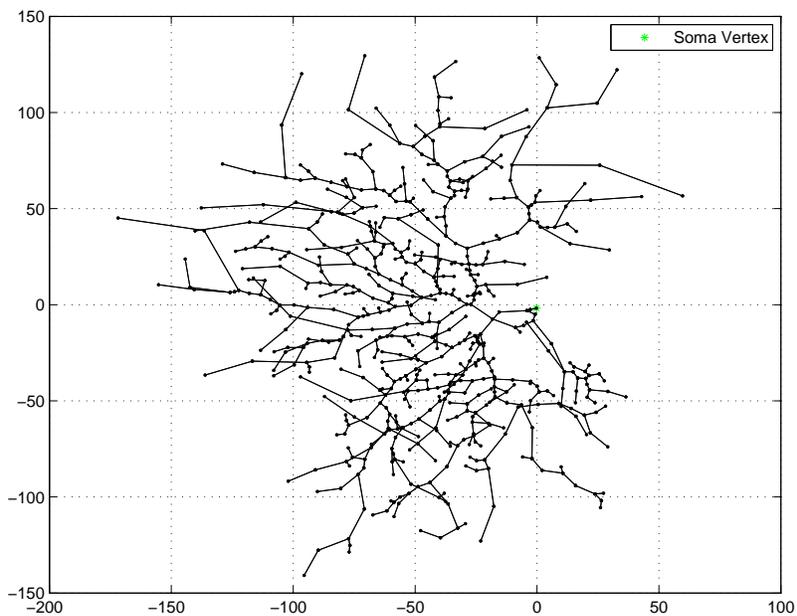}
\label{subfig:rgc60simp}
}
\end{center}
\caption{RGC \#60 from Cluster 1.  (a) Original, 5,636 vertices.  (b) Simplified, 612 vertices.  Approximately 89\% vertices reduction from the original tree to the simplified tree. Note that ``spines'' are preserved. }
\label{fig:rgc60orivssimp}
\end{figure}%
\begin{table}
\centering
\begin{tabular}{|*{8}{r|}}
\hline
\multicolumn{8}{|c|}{Vertex Count Statistical Information} \\
\hline
&&&&&&&\\
Cluster \# & \# RGCs & $\overline{\Lvert{V_o}}$ & $\sigma_o$ & $\overline{\Lvert{V_s}}$ & $\sigma_s$ & Red.~Avg. & Red.~Std. \\
\hline
1 &  9 & 4650.8 & 3693.68 & 866.8 & 189.98 & 70.23\% & 17.80\% \\
\hline
2 &  8 & 1562.4 & 565.18 & 337.1 & 167.65 & 78.46\% & 5.78\% \\
\hline
3 & 18 & 2262.4 & 1448.48 & 422.7 & 169.37 & 77.83\% & 8.27\% \\
\hline
4 &  8 & 9378.1 & 7369.27 & 645.0 & 300.54 & 88.38\% & 8.50\% \\
\hline
5 & 10 & 2778.1 & 1841.34 & 343.6 & 115.20 & 85.02\% & 6.68\% \\
\hline
6 &  9 & 758.7 & 172.16 & 39.6 & 19.60 & 94.55\% & 3.11\% \\
\hline
7 & 15 & 3245.0 & 3299.06 & 333.3 & 52.40 & 83.44\% & 8.46\% \\
\hline
8 & 19 & 3323.5 & 2125.14 & 302.7 & 49.42 & 87.25\% & 6.81\% \\
\hline
9 & 21 & 3113.9 & 2021.73 & 210.7 & 60.25 & 90.23\% & 6.21\% \\
\hline
10 & 13 & 3561.8 & 2727.46 & 173.6 & 35.98 & 91.22\% & 6.18\% \\
\hline
11 & 12 & 4668.8 & 2697.87 & 981.7 & 256.64 & 74.53\% & 10.95\% \\
\hline
12 &  9 & 8273.1 & 6809.86 & 599.3 & 189.34 & 87.02\% & 10.89\% \\
\hline
13 & 19 & 5043.2 & 2744.18 & 724.8 & 205.49 & 80.62\% & 11.56\% \\
\hline
14 &  8 & 3684.0 & 2763.62 & 446.9 & 144.29 & 83.49\% & 7.10\% \\
\hline
no &  1 & 5590.0 & 0.00 & 103.0 & 0.00 & 98.16\% & 0.00\% \\
\hline\hline
Total & 179 & 3852.5 & 3565.28 & 442.0 & 293.10 & 83.99\% & 10.89\% \\
\hline
\end{tabular}
\caption{Vertex count statistical information for the original and simplified 
dendritic tree for each cluster of RGCs.  These clusters were identified by
Coombs et al.~\citep{COOMBS-ALL-2006} from their morphological analysis 
followed by the hierarchical clustering technique.
Cluster `no' indicates a singular dendritic tree that does not belong to any of
the 14 distinct clusters.
The 3rd and 5th columns are the \emph{average} number of vertices across all 
trees for each cluster, for the original and simplified tree, respectively.  
The 4th and 6th columns represent the \emph{standard deviation} of the number of
vertices across all trees for each cluster.  The final two columns represent the
\emph{average percentage reduction} and \emph{reduction standard deviation} of 
number of vertices from the original to the simplified dendritic tree.
The subscripts `o' and `s' indicate `original' and `simplified', respectively.
}
\label{tab:orisimpstats}
\end{table}

\section{Eigenvalue `Plateaux' Phenomenon}
\label{sec:ewmult}
After simplifying all of our dendritic trees in our dataset, we computed their
Laplacian eigenvalues.  We then observed an ``eigenvalue plateaux'' phenomenon
for each of the newly formed simplified trees.  
In Figures~\ref{fig:rgc102simpewplot} and~\ref{fig:rgc108simpewplot}, 
we display a couple of simplified trees and their respective eigenvalue plots.  
The eigenvalues which form these plateaux are approximately $0.3820$ and 
$2.6180$.  
In fact, as we will explain shortly, these values are more precisely written as
$\lm \define \frac{3-\sqrt{5}}{2} = 2 - 2 \cos \frac{\pi}{5} \approx 0.3820$ and 
$\lp \define \frac{3+\sqrt{5}}{2} = 2 - 2 \cos \frac{3\pi}{5} \approx 2.6180$.  
It is interesting to note that the multiplicity of each of those eigenvalues 
are exactly the same.
In Table~\ref{tab:simpewmstats}, we present some statistical information on 
the eigenvalue multiplicity across all trees for each cluster.
For these dendritic trees, 
the multiplicities of $\lm$ were determined numerically
by counting all the eigenvalues lying within the interval of width of 
$2 \times 10^{-10}$ centered at $\lm$, and the same procedure was used for $\lp$.

\begin{figure}
\begin{center}
\subfigure[]{
\includegraphics[height=.425\textheight]{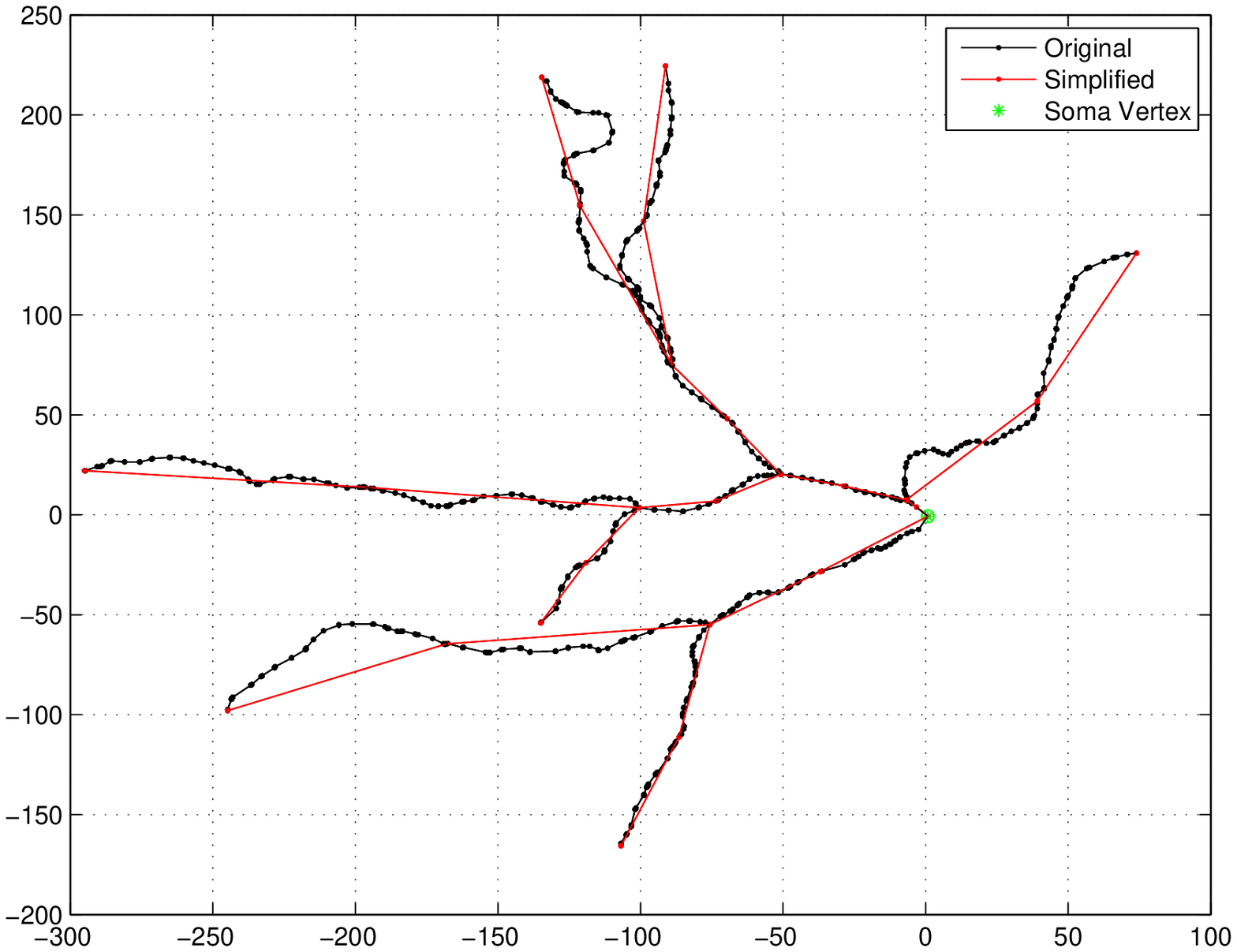}
\label{subfig:rgc102orisimpoverlay}
}
\\
\subfigure[]{
\includegraphics[height=.425\textheight]{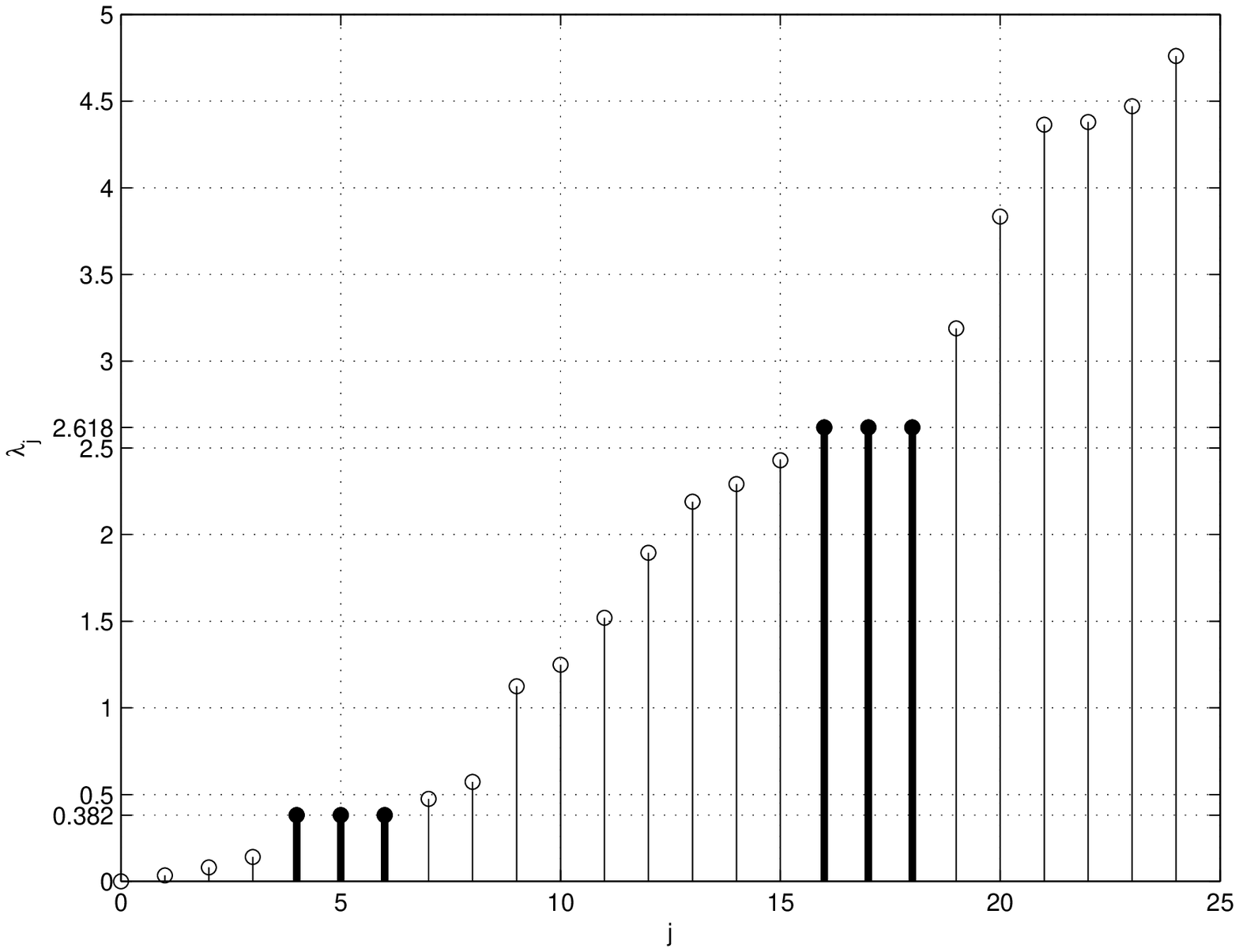}
\label{subfig:rgc102simpewplot}
}
\end{center}
\caption{RGC \#102 of Cluster 6.  (a) Simplified tree overlaid on original dendritic tree.  (b) Eigenvalue distribution of the simplified tree.  Note that $m_T(\lm)=m_T(\lp)=3$.}
\label{fig:rgc102simpewplot}
\end{figure}%
\begin{figure}
\begin{center}
\subfigure[]{
\includegraphics[height=.425\textheight]{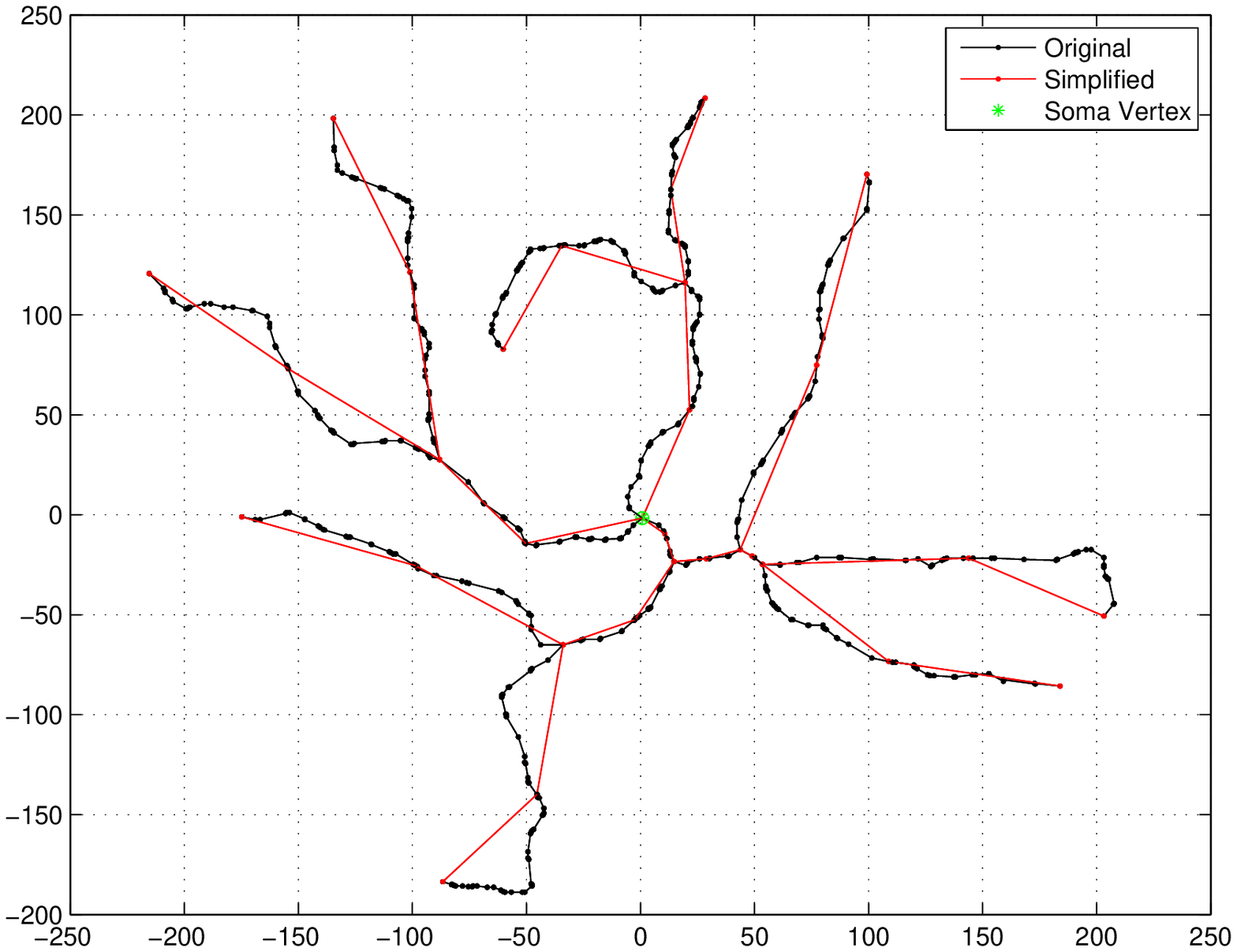}
\label{subfig:rgc108orisimpoverlay}
}
\\
\subfigure[]{
\includegraphics[height=.425\textheight]{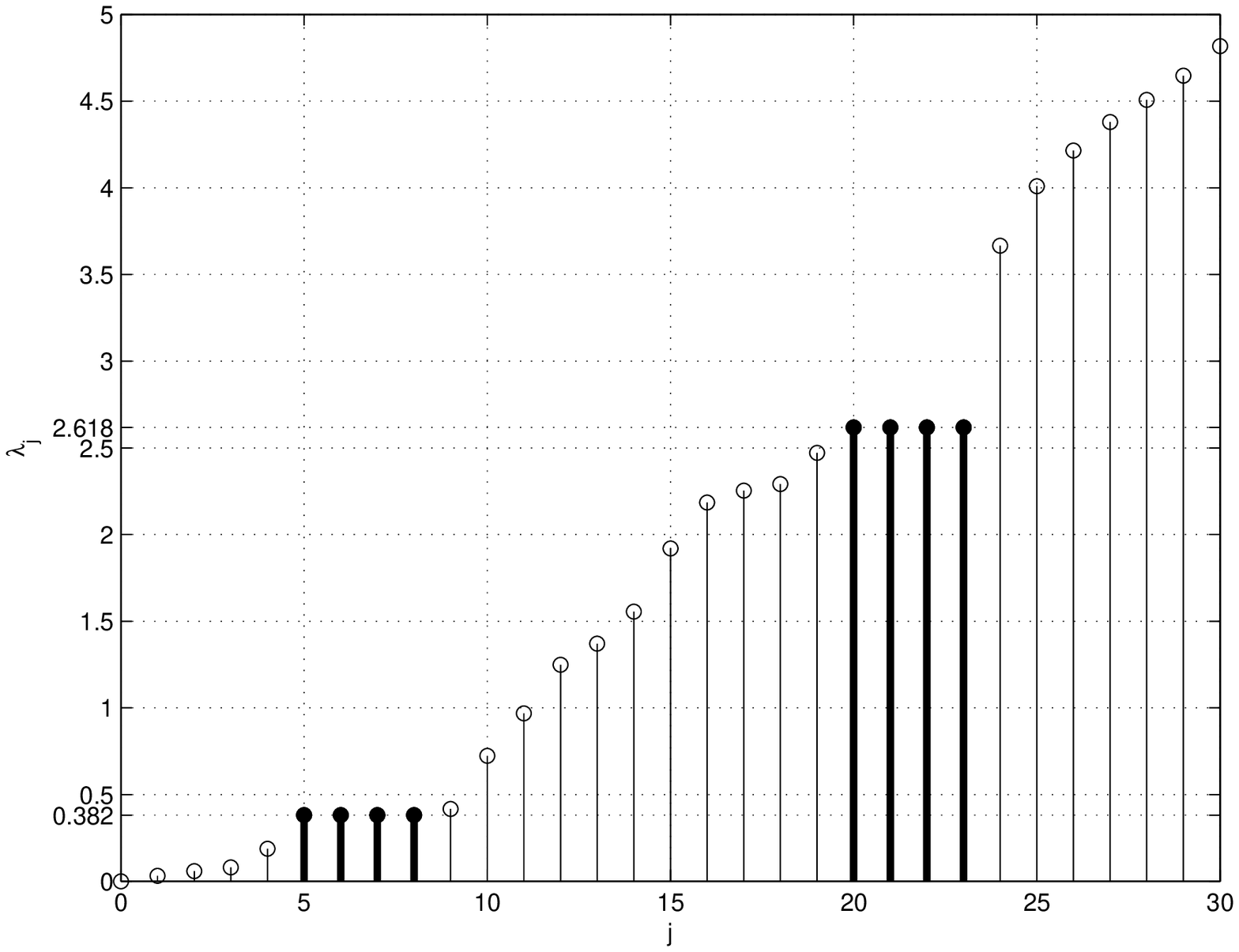}
\label{subfig:rgc108simpewplot}
}
\end{center}
\caption{RGC \#108 of Cluster 6.  (a) Simplified tree overlaid on original dendritic tree.  (b) Eigenvalue distribution of the simplified tree.  Note that $m_T(\lm)=m_T(\lp)=4$.}
\label{fig:rgc108simpewplot}
\end{figure}%
\begin{table}
\centering
\begin{tabular}{|*{6}{r|}}
\hline
\multicolumn{6}{|c|}{Eigenvalue Multiplicity Statistical Information} \\
\hline
&&&&& \\
Cluster \# & $\overline{m_T\Lpar{\Lbra{\lm,\lp}}}$ & $\sigma_m$ & Cluster \# & $\overline{m_T\Lpar{ \Lbra{\lm,\lp} }}$ & $\sigma_m$ \\
\hline
1 & 29.3 & 15.94 & 2 & 20.2 & 10.27 \\
\hline
3 & 34.2 & 11.45 & 4 & 58.8 & 19.54 \\
\hline
5 & 24.8 & 11.43 & 6 & 7.6 & 3.98 \\
\hline
7 & 36.3 & 10.19 & 8 & 31.2 & 8.47 \\
\hline
9 & 26.3 & 6.85 & 10 & 20.8 & 9.43 \\
\hline
11 & 68.7 & 39.04 & 12 & 36.7 & 10.99 \\
\hline
13 & 34.6 & 9.99 & 14 & 34.0 & 8.25 \\
\hline
no & 20.0 & 0.00 & All & 32.9 & 19.61 \\
\hline
\end{tabular}
\caption{Summary of the Laplacian eigenvalue multiplicity of the simplified dendritic trees of our dataset separated by each cluster.  The 2nd and 5th columns contain the \emph{average multiplicity} of the eigenvalues $\lm$ and $\lp$, while the 3rd and 6th columns contain the \emph{standard deviation} of these specific eigenvalue multiplicities. Note that $\overline{m_T\Lpar{\Lbra{\lm,\lp}}}
= 2 \overline{m_T\Lpar{\lm}} = 2 \overline{m_T\Lpar{\lp}}$.}
\label{tab:simpewmstats}
\end{table}

Below we mention a simple example of a tree which contains these eigenvalues and a type of tree that has identical multiplicities of $\lm$ and $\lp$.
\begin{Example}
The simplest tree possessing Laplacian eigenvalues $\lm$ and $\lp$ is $P_5$.
It is well known that the eigenvalues of $P_n$ are $2-2\cos{\frac{k\pi}{n}}$ 
for $k=0,1,\dots,n-1$; see, e.g., \citep{STRANG-1999}.
Hence for $n=5$, we have $\lm=2-2\cos{\frac{\pi}{5}}$ and
$\lp=2-2\cos{\frac{3\pi}{5}}$.
\end{Example}

Now the following proposition demonstrates a concrete example of the
existence of $\lm$ and $\lp$ with multiplicities.
\begin{Proposition} 
Let $T(V,E) = S(k.2) \define S(\underbrace{2,2,\dots,2}_k)$ be a starlike
tree with $k>1$ branches with each branch containing 2 vertices 
so that $\Lvert{V}=2k+1$.  Then $m_T(\lm)=m_T(\lp)=k-1$.
\end{Proposition}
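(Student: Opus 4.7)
The plan is to exhibit $k-1$ explicit linearly independent eigenvectors of $L(T)$ at each of $\lm$ and $\lp$, and then to account for the remaining three eigenvalues via the $S_k$-symmetry of $T$ permuting the branches, so as to show that the multiplicities are sharp.

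\textbf{Lower bound via explicit eigenvectors.} I would label the root $v_1$ (of degree $k$), and for each branch $i=1,\dots,k$ let $u_i$ be the intermediate (degree-$2$) vertex and $w_i$ the leaf, so the edges of $T$ are $\{v_1 u_i,\, u_i w_i\}_{i=1}^k$. I would look for eigenvectors $\bphi$ with $\bphi(v_1)=0$, writing $\bphi(u_i)=a_i$ and $\bphi(w_i)=b_i$. The eigenvalue equation $L(T)\bphi = \lambda\bphi$ at $u_i$ and at $w_i$ becomes
\begin{equation}
(2-\lambda)\,a_i \;=\; b_i, \qquad (1-\lambda)\,b_i \;=\; a_i,
\end{equation}
whose compatibility forces $(1-\lambda)(2-\lambda)=1$, i.e.\ $\lambda^2-3\lambda+1=0$, whose roots are exactly $\lm$ and $\lp$. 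For each such $\lambda$, $b_i$ is determined by $a_i$, while the equation at $v_1$ collapses to the single constraint $\sum_{i=1}^k a_i = 0$. Hence $(a_1,\ldots,a_k)$ ranges freely over a hyperplane of dimension $k-1$, yielding $k-1$ linearly independent eigenvectors for each of $\lm$ and $\lp$. This gives $m_T(\lm),\, m_T(\lp) \geq k-1$.

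\textbf{Upper bound via the $S_k$-symmetry.} The branch-permuting action of $S_k$ preserves $L(T)$, so it splits $\mathbb{R}^{2k+1}$ into the three-dimensional ``symmetric'' subspace $W$ of vectors $\bphi$ with $\bphi(v_1)=c$, $\bphi(u_i)=a$, $\bphi(w_i)=b$ (for some scalars $a,b,c$ independent of $i$), and its orthogonal complement of dimension $2(k-1)$. The branch-by-branch analysis above shows that the $2(k-1)$ eigenvectors already constructed in fact span this complement, on which $L(T)$ has eigenvalues $\lm$ and $\lp$ each with multiplicity $k-1$. Substituting the symmetric ansatz into $L(T)\bphi=\lambda\bphi$ reduces the problem on $W$ to a $3$-dimensional linear system whose characteristic polynomial I would compute as
\begin{equation}
\lambda\bigl[\lambda^2 - (k+3)\lambda + (2k+1)\bigr].
\end{equation}
Using the identity $\lambda^2 = 3\lambda - 1$ (valid at $\lambda\in\{\lm,\lp\}$), the bracketed factor simplifies to $k(2-\lambda)$, which is nonzero since $\lm,\lp\ne 2$. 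Thus the three eigenvalues contributed by $W$ are $0$ and the two roots of $\lambda^2-(k+3)\lambda+(2k+1)$, none of which equals $\lm$ or $\lp$. Combining the two subspaces exhausts all $2k+1$ eigenvalues of $L(T)$ and forces $m_T(\lm)=m_T(\lp)=k-1$.

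The only real obstacle is the minor bookkeeping needed to identify the $2(k-1)$-dimensional orthogonal complement with the span of the explicit eigenvectors, and the algebraic check that the two extra eigenvalues from the symmetric block cannot accidentally equal $\lm$ or $\lp$; both are straightforward once the branch-by-branch reduction is in place.
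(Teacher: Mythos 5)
Your proposal is correct, and I checked the computations: with the root $v_1$, intermediate vertices $u_i$ and leaves $w_i$, the branch equations do force $\lambda^2-3\lambda+1=0$, the constraint at $v_1$ is $\sum_i a_i=0$, the symmetric block in the basis $(c,a,b)$ is $\begin{bmatrix} k & -k & 0\\ -1 & 2 & -1\\ 0 & -1 & 1\end{bmatrix}$ with characteristic polynomial $\lambda\bigl[\lambda^2-(k+3)\lambda+(2k+1)\bigr]$, and reducing the bracket modulo $\lambda^2-3\lambda+1$ indeed gives $k(2-\lambda)\neq 0$, so neither $\lm$ nor $\lp$ comes from the symmetric part. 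However, your route is genuinely different from the paper's: the paper proves this proposition in two lines by citing Das's Lemma 3.1 on starlike trees $S(k.m)$, which for $m=2$ directly lists $2+2\cos\frac{2\pi}{5}=\lp$ and $2+2\cos\frac{4\pi}{5}=\lm$ as eigenvalues of multiplicity exactly $k-1$. What your argument buys is self-containedness and sharpness from first principles: the explicit eigenvectors vanishing at the root give the lower bound (this part closely anticipates the construction the paper uses later, in the proof of its main Theorem on the plateaux, where analogous vectors supported on pendant $P_2$ branches yield only a lower bound $\tvi$), and your $S_k$-symmetry decomposition into the $3$-dimensional fixed subspace and its $2(k-1)$-dimensional complement supplies the matching upper bound, which in the paper is simply inherited from Das's result. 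The paper's proof is shorter but rests on an external lemma; yours is longer but elementary, and as a bonus it also exhibits the remaining spectrum $\{0\}\cup\{\text{roots of }\lambda^2-(k+3)\lambda+(2k+1)\}$ explicitly.
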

We note that $P_5 = S(2.2)$.
\begin{proof}
This easily follows from the following lemma due to Das:
\begin{Lemma}[Das {\citep[Lemma 3.1]{DAS-2007}}]
\label{lem:das}
The Laplacian eigenvalues of starlike tree $S(k.m)$ are 
\begin{equation}
\label{eqn:das}
2+2\cos\left(\frac{p \pi}{2m+1}\right), \quad p=2, 4, \ldots, 2m, 
\end{equation}
and each of multiplicity $k-1$.  Also the remaining eigenvalues satisfy
the following system of equations:
$$
\left\lbrace
\begin{aligned}
\lambda x_1 &= x_1 - x_2 \\
\lambda x_i &= 2x_i-x_{i-1}-x_{i+1} \quad i=2, 3, \ldots, m \\
\lambda x_{m+1} &= kx_{m+1} - kx_m 
\end{aligned} \right.
$$
\end{Lemma}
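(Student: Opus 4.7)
The plan is to exploit the natural $S_k$-action that permutes the $k$ branches of $S(k.m)$ and to decompose the Laplacian eigenspace accordingly. Label the central vertex $v_0$ (of degree $k$) and the vertices on the $j$th branch by $v_{j,1},\ldots,v_{j,m}$, with $v_{j,1}$ adjacent to $v_0$ and $v_{j,m}$ the leaf. Because the branch-permutation representation commutes with the Laplacian $L$, we obtain an orthogonal decomposition $\Rf^{km+1}=W^{\perp}\oplus W$, where $W^{\perp}$ is the $S_k$-invariant subspace (of dimension $m+1$, consisting of functions constant across branches at each level) and $W$ is its complement (of dimension $(k-1)m$, consisting of functions $f$ with $f(v_0)=0$ and $\sum_{j=1}^{k}f(v_{j,i})=0$ for every $i$).

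First I would restrict $Lf=\lambda f$ to $W$. Because $f(v_0)=0$ on $W$, the equations on the $k$ branches decouple entirely, so on each branch we obtain the path problem
\begin{align*}
\lambda x_1 &= 2x_1-x_2,\\
\lambda x_i &= 2x_i-x_{i-1}-x_{i+1},\quad i=2,\ldots,m-1,\\
\lambda x_m &= x_m-x_{m-1},
\end{align*}
with $x_i=f(v_{j,i})$. This is a second-order eigenvalue problem equivalent to imposing a Dirichlet-type ghost condition $x_0:=0$ at the inner end together with the usual leaf equation at the outer end.

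Second, I would solve this path problem via the trigonometric ansatz $x_i=\sin(i\theta)$ with $\lambda=2-2\cos\theta$; this ansatz satisfies $x_0=0$ and the interior recurrence by construction. Substituting into the leaf equation and using a product-to-sum identity yields
\[
\sin((m+1)\theta)-\sin(m\theta)=2\cos\!\Bigl(\tfrac{(2m+1)\theta}{2}\Bigr)\sin\!\Bigl(\tfrac{\theta}{2}\Bigr)=0.
\]
Discarding the trivial $\sin(\theta/2)=0$ leaves $\theta_s=(2s-1)\pi/(2m+1)$ for $s=1,\ldots,m$, hence $\lambda_s=2-2\cos\!\bigl((2s-1)\pi/(2m+1)\bigr)$. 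Using $\cos(\pi-\alpha)=-\cos\alpha$ and reparameterizing via $p=2(m+1-s)\in\{2,4,\ldots,2m\}$ recovers the claimed formula $\lambda=2+2\cos(p\pi/(2m+1))$ for $p=2,4,\ldots,2m$.

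Third, the $k-1$ independent branch-weightings satisfying $\sum_j f(v_{j,i})=0$ at each level give a $(k-1)$-parameter family of eigenvectors for each $\lambda_s$, contributing multiplicity $k-1$ and accounting for all $(k-1)m$ dimensions of $W$. The remaining $m+1$ eigenvalues live in $W^{\perp}$; collapsing each level to its common value and writing the eigenvalue equation at $v_0$ (which absorbs a factor of $k$ from its $k$ identical neighbors) produces the stated tridiagonal system verbatim. The subtlest step I foresee is verifying that each $\lambda_s$ has multiplicity \emph{exactly} $k-1$, rather than being boosted by an accidental coincidence with a root of the $(m+1)\times(m+1)$ symmetric tridiagonal system governing $W^{\perp}$; this can be settled by evaluating the characteristic polynomial of that tridiagonal matrix at each $\lambda_s$ (via a Chebyshev-polynomial recursion along the path) and showing it is nonzero, which is the only computation in the argument not already immediate from the symmetry decomposition.
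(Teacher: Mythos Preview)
The paper does not prove this lemma; it is quoted verbatim from Das~\cite{DAS-2007} and used as a black box inside the proof of the preceding Proposition. So there is no ``paper's own proof'' to compare against.

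That said, your argument is sound and is essentially the standard one. The symmetry decomposition $\Rf^{km+1}=W^\perp\oplus W$ is exactly what makes the problem tractable: on $W$ the center drops out and each branch becomes a Dirichlet--Neumann path, whose spectrum you compute correctly (the identity $\sin((m+1)\theta)-\sin(m\theta)=2\cos\bigl(\tfrac{(2m+1)\theta}{2}\bigr)\sin\bigl(\tfrac{\theta}{2}\bigr)$ and the reparameterization $p=2(m+1-s)$ both check out); on $W^\perp$ the quotient gives precisely the $(m+1)\times(m+1)$ system stated in the lemma. The $(k-1)$-dimensional space of branch weightings with zero sum yields multiplicity at least $k-1$ for each $\lambda_s$, and your dimension count $(k-1)m+(m+1)=km+1$ shows nothing is missed.

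Your caveat about ``exactly $k-1$'' is well taken and is the one place where a little more work is genuinely required if the lemma is read as asserting exact multiplicity. One clean way to finish, avoiding a direct Chebyshev evaluation, is to note that the $W^\perp$ block is (after symmetrization by the diagonal weight $\diag(k,\ldots,k,1)$) an unreduced Jacobi matrix, hence has $m+1$ \emph{simple} eigenvalues; it then suffices to show that none of them equals any $\lambda_s$. Alternatively, for the application in this paper only the lower bound $m_G(\lambda_s)\ge k-1$ is ever used, and that part of your argument is complete as it stands.
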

Hence, setting $m=2$ in this lemma, the eigenvalues \eqref{eqn:das} are of 
the form  $2 + 2 \cos \frac{p \pi}{5}$, $p=2, 4$.
But $2+2\cos\frac{2\pi}{5}=2-2\cos\frac{3\pi}{5}=\lp$, and
$2+2\cos\frac{4\pi}{5}=2-2\cos\frac{\pi}{5}=\lm$.
Their multiplicies are $k-1$ as Das's lemma guarantees.
\end{proof}

We now explain why our simplified dendritic trees have the eigenvalue plateaux
phenomenon that we had observed.  First, let us define some notation that will
be used in our theorem below.  Let $G(V,E)$ be a simple, connected, undirected,
and unweighted graph.  Let
\eqn{
\vo \define \Lbra{v \in V \cond d_v = 1} \subset V
\label{eqn:vo}
}
be the set of \emph{pendant vertices} and 
\eqn{
\vto \define \Lbra{v \in V \cond d_v=2~\text{and}~\exists u \in \vo~\text{s.t.}~u \sim v}
\label{eqn:vto}
}
be the set of \emph{pendant neighbors of degree 2}, and
\eqn{
\vi \define \Lbra{v \in V \cond d_v \geq 3~\text{and}~\exists u \in \vto~\text{s.t.}~u \sim v} 
\label{eqn:vi}
}
be the set of \emph{vertices of degree 3 or greater which are adjacent to 
vertices in $\vto$}.  For every $v \in \vi$, let us define the following two 
quantities
\eqn{
c(v) \define \Lvert{\Lbra{v' \in \vto \cond v' \sim v}}
\label{eqn:cv}
}
and
\eqn{
\tvi \define \sum_{v \in \vi} \Lpar{c(v)-1}.
\label{eqn:tvi}
}
Note that there may be vertices of degree 3 or greater in $V$ that do not 
belong to $\vi$.  
The following theorem explains the eigenvalue plateaux phenomenon 
\emph{not only for simplified trees but also for more general graphs}.
\begin{Theorem}
\label{thm:ewmult}
Let $G(V,E)$ be a simple, connected, undirected, and unweighted graph with 
$n=\Lvert{V}$. 
Let $\lm$ and $\lp$ be as defined previously in this section.  
Suppose $\tvi \geq 1$, then
\eqn{
m_G(\lm) = m_G(\lp) \geq \tvi .
\label{eqn:multlmpineq}
}
In other words, the multiplicity of the graph Laplacian eigenvalues $\lm$ 
and that of $\lp$ are the same and at least $\tvi$.
\end{Theorem}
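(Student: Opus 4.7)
The plan is two-fold: first, to construct explicit eigenvectors of $L(G)$ localized near each $v \in \vi$ and thereby establish $m_G(\lm), m_G(\lp) \geq \tvi$; and second, to upgrade these two lower bounds to the equality $m_G(\lm) = m_G(\lp)$ using an algebraic argument based on the irreducibility over $\Q$ of the common minimal polynomial of $\lm$ and $\lp$.

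For the construction, fix $v \in \vi$ and label the $c(v)$ vertices of $\vto$ adjacent to $v$ as $u_1,\dots,u_{c(v)}$, and denote by $w_i \in \vo$ the unique pendant neighbor of $u_i$. I would search for eigenvectors $x$ of $L(G)$ supported on $\bigcup_{i=1}^{c(v)}\{u_i,w_i\}$, in particular with $x(v)=0$. Writing out the eigenvalue equation $L(G)x = \lambda x$ at $w_i$ yields $x(u_i) = (1-\lambda)x(w_i)$, and at $u_i$ (using $x(v)=0$) yields $(2-\lambda)x(u_i) = x(w_i)$. Compatibility forces $(1-\lambda)(2-\lambda) = 1$, i.e., $\lambda^2 - 3\lambda + 1 = 0$, whose roots are precisely $\lm$ and $\lp$. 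The equation at $v$ itself reduces to $\sum_i x(u_i) = 0$, equivalently $\sum_i x(w_i) = 0$; at every other vertex $y$ the equation is trivially satisfied, since each $u_i$ has degree exactly two with neighbors $\{w_i,v\}$ and each $w_i$ has degree one with neighbor $u_i$, so no vertex outside the support of $x$ (other than $v$) is adjacent to any vertex carrying a nonzero value.

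For each $\lambda \in \{\lm,\lp\}$, the values $(x(w_1),\ldots,x(w_{c(v)}))$ may be chosen freely subject to $\sum_i x(w_i) = 0$, yielding a $(c(v)-1)$-dimensional space of eigenvectors attached to $v$. Because the supports associated with distinct $v \in \vi$ are pairwise disjoint (each $u_i$ has $v$ as its only non-pendant neighbor), these combine into $\sum_{v\in\vi}(c(v)-1) = \tvi$ linearly independent eigenvectors for each of $\lm$ and $\lp$. To upgrade this to $m_G(\lm) = m_G(\lp)$, I would observe that since $G$ is unweighted, $L(G)$ has integer entries, so $\chi_L(t) \define \det(tI - L(G))$ is monic in $\Z[t]$; factoring $\chi_L = q^k r$ in $\Q[t]$ with $q(t) = t^2 - 3t + 1$ the $\Q$-irreducible common minimal polynomial of $\lm$ and $\lp$ and $\gcd(q,r) = 1$, one has $r(\lm), r(\lp) \neq 0$, so $\lm$ and $\lp$ appear as roots of $\chi_L$ with the same multiplicity $k$. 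Since $L(G)$ is symmetric and hence diagonalizable, this geometric multiplicity equals the algebraic one, giving $m_G(\lm) = m_G(\lp) \geq \tvi$.

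The principal delicate point is the verification that the eigenvalue equation holds at every vertex of $G$ outside the local support; the hypothesis $d_{u_i}=2$ is essential here, since otherwise $u_i$ could be adjacent to a further nontrivial vertex and destroy the cancellation needed at that vertex. The algebraic equality step is short but depends crucially on $L(G)$ having rational entries, which is why the unweighted hypothesis cannot be dispensed with.
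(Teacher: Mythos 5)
Your proposal is correct and follows essentially the same route as the paper's proof: you construct the same $(c(v)-1)$-dimensional spaces of eigenvectors supported on the pendant $P_2$ paths attached to each $v \in \vi$ (with the sum-zero condition at $v$ replacing the paper's block-matrix computation with $Q$ and $\br_j$), and you obtain $m_G(\lm)=m_G(\lp)$ from the same underlying fact that $\lm,\lp$ are the conjugate roots of the $\Q$-irreducible polynomial $t^2-3t+1$ dividing the integer characteristic polynomial. Your explicit $\chi_L = q^k r$ factorization argument is just a more elementary spelling-out of the Galois-theoretic statement the paper cites, so no substantive difference remains.
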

\begin{proof}
Let $\kappa \define \Lvert{\vi}$.  If $\kappa=0$, then obviously 
$\tvi = 0 \leq m_G(\lm)$.  Similarly, for $m_G(\lp)$.  Suppose $\kappa>0$, 
then $\vi \neq \emptyset$.  For every $v \in \vi$, we have $c(v)\geq1$ by 
the definitions of \eqref{eqn:cv} and \eqref{eqn:vi}.

Suppose $\vi = \Lbra{v_1,\dots,v_\kappa}$.
A Laplacian matrix of $G$ takes the form
\eqn{
L(G)=
\left[
\begin{array}{ccc|ccc|ccc}
B_1   &        &         & \br_1 &        &           &  &  & \\
      & \ddots &         &       & \ddots &           &  &  & \\
      &        & B_\kappa &       &        &  \br_\kappa &  &  & \\ \hline

\br_1^\transp &  &        &       &        &            &  &  & \\       
      & \ddots &         &       &  C_1   &          &  & C_2 & \\
      &      & \br_\kappa^\transp & &       &          &   &  & \\ \hline

      &      &           &       &       &            &  &    & \\
      &      &           &       &  C_3  &            &  & C_4 & \\
      &      &           &       &       &            &  &     &
\end{array}
\right],
\label{eqn:Lorg}
}
where for each $j$ with $1 \leq j \leq \kappa$, 
and $B_j \define \diag(Q, \ldots, Q) \in \Rf^{2c(v_j) \times 2c(v_j)}$, i.e.,
a block diagonal matrix with $c(v_j)$ copies of 
$Q \define \begin{bmatrix} 2 & -1 \\ -1 & 1 \end{bmatrix}$, 
and $\br_j \in \Rf^{2c(v_j) \times 1}$ is a vector 
of $c(v_j)$ stacks of the vector $\br \define \begin{bmatrix} -1 \\ 0 \end{bmatrix}$, i.e.,
$$
\br_j \define
\begin{bmatrix}
\br \\
\br \\
\vdots \\
\br
\end{bmatrix} \in \Rf^{2c(v_j) \times 1}.
$$
Note that the eigenvalues of $Q$ is $\lm$ and $\lp$.

To describe the block matrices $C_\ell$ for $\ell=1,2,3,4$, first let us define
\eqn{
\vtoi \define \Lbra{v \in \vto \cond \exists v' \in \vi~\text{s.t.}~v \sim v'} \subseteq \vto
\label{eqn:vtoi}
}
and
\eqn{
\voi \define \Lbra{v \in \vo \cond \exists v' \in \vtoi~\text{s.t.}~v \sim v'} \subseteq \vo.
\label{eqn:voi}
}
Note that $\left| \vtoi \right| = \left| \voi \right| = \sum_{j=1}^\kappa c(v_j)$.
Let $V_r \define V \setminus \Lpar{\vi \cup \vtoi \cup \voi}$.  
$C_1$ is a $\kappa \times \kappa$ matrix whose diagonal entries are 
$\dv{1},\dv{2},\dots,\dv{\kappa}$ and whose off-diagonal entries
depends on the interactions (edges) between vertices within $\vi$.
The block matrices $C_2$ and $C_3$ correspond to the interactions between 
vertices in $\vi$ and $V_r$, while the block matrix $C_4$ corresponds to the
interactions between vertices within $V_r$.

We need only consider the case when $c(v_j)>1$, since if $c(v_j)=1$, then it 
does not contribute to the sum in \eqref{eqn:tvi}.  Our strategy is to construct 
a set of $c(v_j)-1$ eigenvector(s) associated to the eigenvalue $\ew$, where 
$\ew=\lm$ or $\lp$ or equivalently when $\ew$ satisfies the characteristic 
equation $\ew^2-3\ew+1=0$.  First, let 
\eqn{
\bxew \define
\begin{bmatrix}
\frac{-1}{2-\ew} \\
-1
\end{bmatrix}
\label{eqn:bx}
}
and let $\zv_k$ denote the zero vector of length $k$.

For $\ell=1,\dots,c(v_j)-1$, let $\by_\ell$ be the vector of length $2c(v_j)$ 
where $\bxew$ occupies the 1st and 2nd entries of $\by_\ell$ and $-\bxew$ occupies the $2\ell+1$ and $2\ell+2$ entries of $\by_\ell$, i.e.,
$$
\by_\ell=
\begin{bmatrix}
\bxew \\
\zv_2 \\
\vdots \\
\zv_2 \\
-\bxew \\
\zv_2 \\
\vdots \\
\zv_2
\end{bmatrix}
\begin{array}{l}
\\ \\ \\ \leftarrow 2\ell+1~\text{and}~2\ell+2~\text{positions.} \\ \\ \\
\end{array}
$$
Therefore,
$$
\by_\ell \in 
\underbrace{
\Lbra{
\begin{bmatrix}
\bxew \\ -\bxew \\ \zv_2 \\ \vdots \\ \vdots \\ \vdots \\ \zv_2
\end{bmatrix},
\begin{bmatrix}
\bxew \\ \zv_2 \\ -\bxew \\ \zv_2 \\ \vdots \\ \vdots \\ \zv_2
\end{bmatrix},
\dots,
\begin{bmatrix}
\bxew \\ \zv_2 \\ \vdots \\ \vdots \\ \zv_2 \\ -\bxew \\ \zv_2
\end{bmatrix},
\begin{bmatrix}
\bxew \\ \zv_2 \\ \vdots \\ \vdots \\ \vdots \\ \zv_2 \\ -\bxew
\end{bmatrix}
}
}_{c(v_j)-1~\text{vector(s)}}.
$$
Note that,
\eqn{
Q\bxew=
\begin{bmatrix} 2 & -1 \\ -1 & 1 \end{bmatrix}
\begin{bmatrix} \frac{-1}{2-\ew} \\ -1 \end{bmatrix}
=
\begin{bmatrix} \frac{-\ew}{2-\ew} \\ \frac{\ew-1}{2-\ew} \end{bmatrix}
=
\ew
\begin{bmatrix} \frac{-1}{2-\ew} \\ -1 \end{bmatrix}
=
\ew \bxew,
\label{eqn:qewev}
}
where the second equality from the right of \eqref{eqn:qewev} holds if
$\ew=\lm$ or $\lp$. 
From this it is easy to see that
$$
B_j \by_\ell = \ew \by_\ell,\quad \ell=1,\dots,c(v_j)-1.
$$
Hence, $\by_\ell$ is an eigenvector of $B_j$ with eigenvalue $\ew$ for each $\ell=1,\dots,c(v_j)-1$.  


It is clear that $\br^\transp_j \by_\ell = 0$ for all $\ell=1,\dots,c(v_j)-1$
using the definition of both vectors.
Hence we can construct our eigenvectors for $L(G)$ corresponding to
the eigenvalue $\lambda = \lm$ or $\lp$ as
\eqn{
\ev{\Lpar{\ell,j}}=
\begin{bmatrix}
\zvcv{1} \\
\vdots \\
\zvcv{j-1} \\
\by_\ell \\
\zvcv{j+1} \\
\vdots \\
\zvcv{\kappa} \\
\zv_{n-\sum_{i=1}^\kappa 2c(v_i)}
\end{bmatrix},
\label{eqn:ev-ell-j}
}
for each $j=1,\dots,\kappa$ and $\ell=1,\dots,c(v_j)-1$. 
It is clear that the muliplicity of $\lm$ accounted by the vectors of the form
\eqref{eqn:ev-ell-j} is $\sum_{j=1}^\kappa (c(v_j)-1)$, which is exactly equal
to $\tvi$.  The same argument also holds for $\lp$.  
Therefore the inequality of \eqref{eqn:multlmpineq} is achieved.  

Finally, in order to show $m_G(\lm)=m_G(\lp)$, let us first recall that
the characteristic polynomial of $L(G)$ is a monic polynomial of
integer-valued coefficients.  
Moreover, the numbers $\lm$ and $\lp$ are the so-called
\emph{algebraic integers} in the field of $\Q\left[\sqrt{5}\right]$ \citep[Chap.~13]{ARTIN}.
Combining these with the fact that the eigenvalues of $L(G)$ are
nonnegative real numbers, thanks to the Galois theory \citep[Chap.~16]{ARTIN},
we know that if $\lm$ is a root of such characteristic polynomial, 
then its ``real conjugate'' $\lp$ must also be a root and their multiplicities
must be the same, i.e., $m_G(\lm)=m_G(\lp)$.
\end{proof}
We note that 
the eigenvectors constructed in \eqref{eqn:ev-ell-j} do not include the remaining
eigenvectors of $L$, especially those related to the $C_1,C_2,C_3,$ and $C_4$ 
portions.

We now present some examples below to demonstrate our Theorem~\ref{thm:ewmult}.
\begin{Example}[$m_G(\lm)=m_G(\lp)=\tvi$]
Figure~\ref{fig:rgc102simpthmex} shows that $\tvi=3$.  We can see this by 
observing the bordered rectangles encompassing the indicated vertices.  
Here, we have $v_i \in \vi$ with $c(v_i)=2$ for each $i=1,2,3$ in 
Figure~\ref{fig:rgc102simpthmex}.  By comparison to the results displayed in
Figure~\ref{subfig:rgc102simpewplot}, we see that $m_G(\lm)=m_G(\lp) = 3 = \tvi$.
\end{Example}
\begin{figure}
\centering
\includegraphics[width=\textwidth]{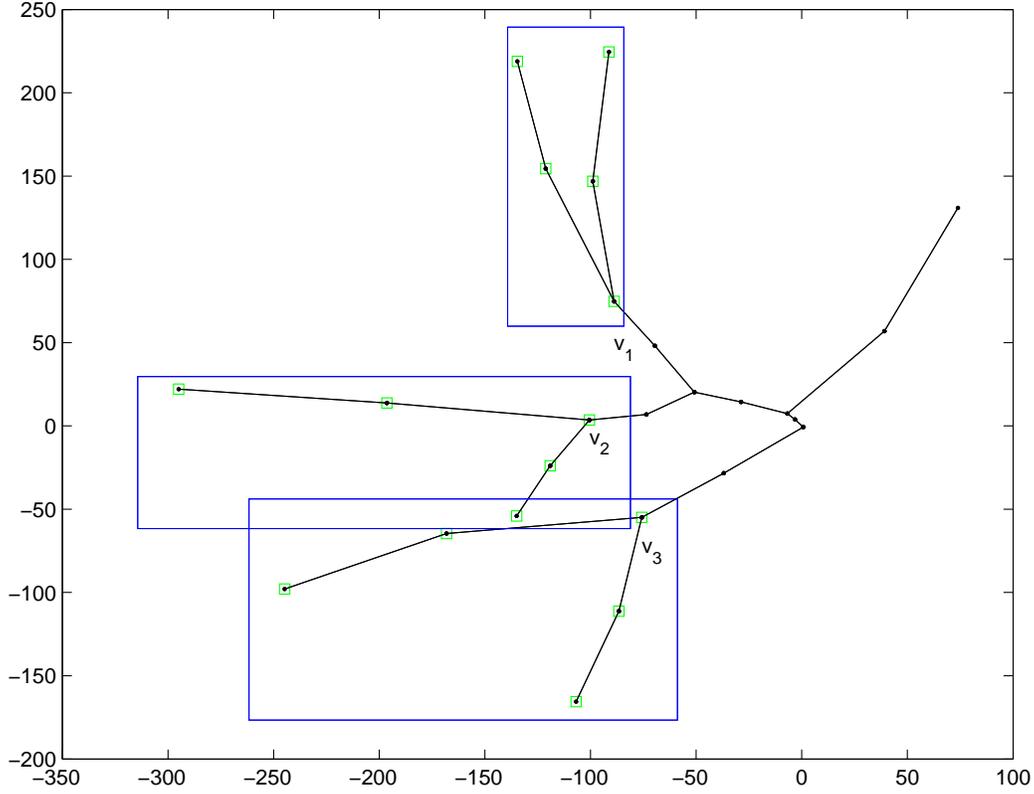}
\caption{RGC \#102 of Cluster 6 dendritic tree simplified.  $v_i \in \vi$ for each $i=1,2,3$ with $c(v_i)=2$ for $i=1,2,3$, hence $\tvi=3$.}
\label{fig:rgc102simpthmex}
\end{figure}

\begin{Example}[$m_G(\lm)=m_G(\lp) \gneqq \tvi$]
Figure~\ref{fig:rgc96simpthmex} shows an example where 
$13=m_G(\lm) = m_G(\lp) \gneqq \tvi=12$.  The rectangles in this figure 
indicates locations of vertices $v_i \in \vi$ for $i=1,2,\dots,12$, which 
contribute to $\tvi$. 
\end{Example}
\begin{figure}
\centering
\includegraphics[width=\textwidth]{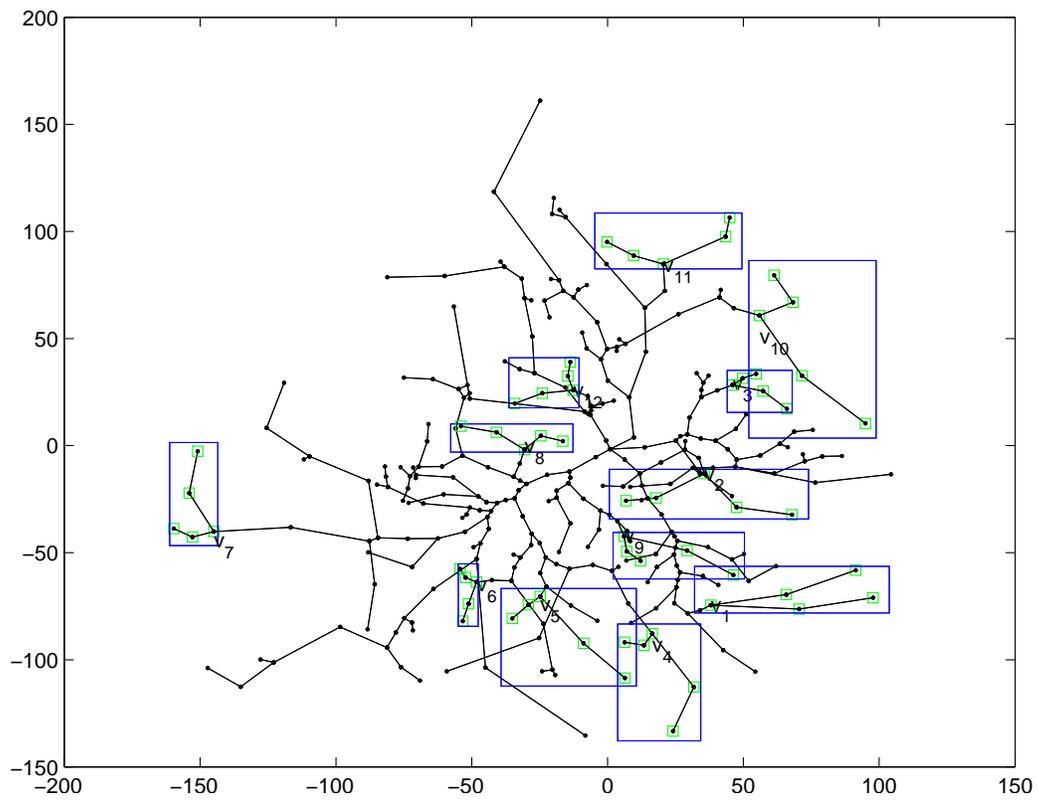}
\caption{RGC \#96 of Cluster 3 dendritic tree simplified.  $\tvi=12$, yet 
$m_G(\lm)=m_G(\lp)=13$.  Vertices of interest are surrounded by green outlined 
squares and encompassed by a rectangle.}
\label{fig:rgc96simpthmex}
\end{figure}

\section{Discussion}
\label{sec:conc}
In this article, we introduced a tree simplification procedure that yielded the
highly pronounced ``eigenvalue plateaux'' phenomenon in all of our simplified
dendritic trees.  We explained the reason of the occurrence of this phenomenon
by splitting the vertex set $V$ of a graph (more general than such
simplified trees) into a set of mutually exclusive subsets, 
$\vi$, $\vtoi$, $\voi$, and $V_r$ followed by the explicit construction of
eigenvectors corresponding to the multiple eigenvalues.

We now discuss a potential generalization to our Theorem~\ref{thm:ewmult}.
Recall the following theorem on the relationship between the 
pendant vertices and pendant neighbors:
\begin{Theorem}[Faria \citep{FARIA-1985}; see also Merris \citep{MERRIS-1994}]
Let $G$ be a graph, and let $p(G)$ and $q(G)$ be the number of pendant vertices
and pendant neighbors in $G$, respectively.  Then,
$$
p(G)-q(G) \leq m_G(1).
$$
\end{Theorem}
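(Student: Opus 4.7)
The plan is to construct $p(G)-q(G)$ linearly independent eigenvectors of $L(G)$ associated with the eigenvalue $1$; this directly delivers $m_G(1) \geq p(G) - q(G)$.

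First, assuming $G \neq K_2$ (the case $G = K_2$ satisfies $p(G) = q(G) = 2$ and $m_G(1) = 0$, so the inequality is trivial), I would let $Q$ denote the set of pendant neighbors, so $|Q| = q(G)$. For each $u \in Q$ put $P(u) := \{v \in V : d_v = 1,\ v \sim u\}$ and $k_u := |P(u)|$. Because every pendant has degree $1$ and its unique neighbor must lie in $Q$ (using $G \neq K_2$), the family $\{P(u)\}_{u \in Q}$ partitions the set of pendant vertices; in particular $\sum_{u \in Q} k_u = p(G)$ and hence $\sum_{u \in Q}(k_u - 1) = p(G) - q(G)$.

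Next, for each $u \in Q$ with $k_u \geq 2$, I would build $k_u - 1$ eigenvectors as follows. Write $P(u) = \{v_1, \ldots, v_{k_u}\}$, pick any $(\alpha_1,\ldots,\alpha_{k_u}) \in \mathbb{R}^{k_u}$ with $\sum_i \alpha_i = 0$, and define $\mathbf{x} \in \mathbb{R}^n$ by $x_{v_i} = \alpha_i$ and $x_w = 0$ for all $w \notin P(u)$. A row-by-row check verifies $L(G)\mathbf{x} = \mathbf{x}$: at $v_i$ one gets $1\cdot \alpha_i - x_u = \alpha_i$; at $u$ one gets $d_u\cdot 0 - \sum_i \alpha_i - \sum_{w \sim u,\ w \notin P(u)} x_w = 0$, matching $x_u = 0$; and at any other vertex $w$, each neighbor of $w$ lying in $P(u)$ would be a pendant whose sole neighbor is $u \neq w$, so no such neighbor exists and the row evaluates to $0$. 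The space of coefficient tuples with $\sum_i \alpha_i = 0$ has dimension $k_u - 1$, giving $k_u - 1$ such eigenvectors per $u$.

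Finally, the eigenvectors produced for distinct $u \in Q$ have pairwise disjoint supports (contained in the disjoint sets $P(u)$), so the entire collection is linearly independent and
\[
m_G(1) \;\geq\; \sum_{u \in Q}\bigl(k_u - 1\bigr) \;=\; p(G) - q(G).
\]
The single subtle point, and where I would be most careful, is the row computation at vertices $w \notin P(u) \cup \{u\}$: the argument that the pendants in $P(u)$ cannot contribute to such a row is the same fact that makes the partition $\{P(u)\}_{u \in Q}$ well-defined, namely that a pendant has exactly one neighbor. Beyond this bookkeeping, I anticipate no real obstacle; the proof is essentially an explicit construction whose correctness is local to each $u$.
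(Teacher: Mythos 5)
Your proof is correct. Note that the paper does not prove this statement at all: it is quoted verbatim from Faria \citep{FARIA-1985} (see also Merris \citep{MERRIS-1994}) as a known result motivating the discussion, so there is no in-paper argument to compare against line by line. Your construction is the standard one for Faria's inequality, and every step checks out: with $u$ a pendant neighbor and $P(u)$ its pendant set, the vector supported on $P(u)$ with coefficients summing to zero satisfies $(L\mathbf{x})_{v_i}=\alpha_i - x_u=\alpha_i$, $(L\mathbf{x})_u = -\sum_i\alpha_i = 0$, and vanishes at every other row because a pendant's unique neighbor is $u$; disjointness of the supports across distinct $u$ gives linear independence, and $\sum_{u}(k_u-1)=p(G)-q(G)$, with the $K_2$ degeneracy correctly set aside. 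It is worth observing that your argument is precisely the degree-one analogue of the strategy the paper does use for its main result, Theorem~\ref{thm:ewmult}: there, instead of single pendant edges attached to a common vertex $u$, one has pendant paths of length two attached to a common vertex $v\in\vi$, and the eigenvectors are differences $\begin{bmatrix}\bxew\\ -\bxew\end{bmatrix}$ of the $Q$-eigenvector placed on two such paths, yielding $c(v)-1$ independent eigenvectors per $v$ and the bound $\tvi\leq m_G(\lm)=m_G(\lp)$, just as your $\sum_i\alpha_i=0$ tuples yield $k_u-1$ eigenvectors per pendant neighbor and the bound $p(G)-q(G)\leq m_G(1)$. So your proof both fills in the cited result and makes transparent why the paper's theorem is its natural $P_2$ generalization; the only thing it does not recover is Faria's original route via permanental roots and star degree, which is not needed for the inequality.
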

This inequality was used to derive a spectral feature for clustering dendritic
trees in \citep{SAITO-ALL-2009}.
In this article, if we refer to 
the vertices in $\vi$ as \emph{pendant $P_2$ neighbors} and 
the vertices in the set $\vtoi$ as \emph{pendant $P_2$ vertices},
with $q_2(G)$ and $p_2(G)$ as their cardinalities, respectively,
then it is clear that $p_2(G)=\left|\vtoi\right|=\sum_{v\in\vi}c(v)$, $q_2(G)=\left|\vi\right|=\kappa$.
Hence, Theorem~\ref{thm:ewmult} can be rewritten as
$$
p_2(G)-q_2(G) \leq m_G(\lm) = m_G(\lp).
$$
More generally, let us define the notion of a \emph{pendant $P_j$ vertex} and a
\emph{pendant $P_j$ neighbor}.
A vertex $v$ in a graph $G(V, E)$ with $|V|=n$ is said to be a pendant $P_j$
vertex ($1 \leq j \leq n$) if the following conditions are satisfied:
1) $v$ is a trivial vertex; 2) $v$ is adjacent to a nontrivial vertex $u$ of 
degree greater than two; and 3) there is a pendant vertex $w \in V$ such that
$u$ and $w$ form a neighboring pair of nontrivial vertices with a path of
length $j$.  A vertex $u \in G(V,E)$ is said to be a pendant $P_j$ neighbor
if $u$ is a nontrivial vertex with degree greater than two and is adjacent 
to a pendant $P_j$ vertex.
We then have the following:
\begin{Conjecture}
Let $G(V,E)$ be a simple, connected, undirected, and unweighted graph with $|V|=n$.
Let $1 \leq j \leq n$, and let $p_j(G)$, $q_j(G)$ be the number of pendant 
$P_j$ vertices and the number of pendant $P_j$ neighbors, respectively.
Then,  
$$
p_j(G)-q_j(G) \leq m_G(\ew_s(G)),
$$
holds for some $s \in \{0, \ldots, n-1\}$.
\end{Conjecture}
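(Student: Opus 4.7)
The plan is to mimic the proof of Theorem~\ref{thm:ewmult} with the $2\times 2$ block $Q$ replaced by the $j\times j$ tridiagonal matrix
\[
Q_j \define
\begin{bmatrix}
2 & -1 & & & \\
-1 & 2 & -1 & & \\
& \ddots & \ddots & \ddots & \\
& & -1 & 2 & -1 \\
& & & -1 & 1
\end{bmatrix}\in\Rf^{j\times j},
\]
which is the submatrix of $L(G)$ indexed by the $j$ non-$u$ vertices $v_1, v_2, \ldots, v_{j-1}, w$ of a single pendant $P_j$ branch with vertex sequence $u, v_1, v_2, \ldots, v_{j-1}, w$ hanging off a pendant $P_j$ neighbor $u$. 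The diagonal pattern $(2,\ldots,2,1)$ encodes the degrees in $G$ of these vertices, precisely because every intermediate $v_i$ is trivial (degree $2$ in $G$) and $w$ is pendant (degree $1$); the off-diagonal $-1$'s encode the branch edges; and the edge from $v_1$ to $u$ is ``amputated,'' so that $Q_j$ plays the role of a Dirichlet-to-free path Laplacian.

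First I would diagonalize $Q_j$. The sine ansatz $x_k=\sin(k\theta)$ makes every interior row automatically yield $\ew=2-2\cos\theta$, while the last row $x_j-x_{j-1}=\ew x_j$ reduces by standard trig to $\cos\Lpar{(j+\tfrac{1}{2})\theta}=0$, giving the $j$ simple eigenvalues
\[
\ew_k^{(j)}\define 2-2\cos\frac{(2k+1)\pi}{2j+1}, \qquad k=0,1,\ldots,j-1,
\]
which for $j=2$ recover exactly $\{\lm,\lp\}$ of Theorem~\ref{thm:ewmult}. Writing $c_j(u)$ for the number of pendant $P_j$ branches attached to a pendant $P_j$ neighbor $u$, one has $p_j(G)-q_j(G)=\sum_u(c_j(u)-1)$. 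For each such $u$ with $c_j(u)\ge 2$, fix a $Q_j$-eigenvector $\boldsymbol{\xi}\in\Rf^j$ at eigenvalue $\ew=\ew_k^{(j)}$, and for $\ell=1,\ldots,c_j(u)-1$ define $\bphi_{u,k,\ell}\in\Rf^n$ to be the vector placing $\boldsymbol{\xi}$ on the first pendant $P_j$ branch at $u$, placing $-\boldsymbol{\xi}$ on the $(\ell+1)$-th pendant $P_j$ branch at $u$, and vanishing elsewhere. This is the direct generalization of the construction~\eqref{eqn:ev-ell-j}.

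The verification that $L(G)\bphi_{u,k,\ell}=\ew\bphi_{u,k,\ell}$ then splits coordinate by coordinate: interior rows of each active branch reproduce interior rows of $Q_j\boldsymbol{\xi}=\ew\boldsymbol{\xi}$; the first row at $v_1$ matches the top row of $Q_j$ because $\bphi_{u,k,\ell}$ vanishes at $u$; the last row at $w$ matches the bottom row of $Q_j$ since $w$'s only neighbor is $v_{j-1}$; at $u$ itself the two active branches contribute $+\xi_1$ and $-\xi_1$, which cancel; and outside the two active branches the vector is identically zero. Vectors at distinct $u$'s have disjoint supports, and at fixed $u$ the $c_j(u)-1$ vectors are linearly independent by the same argument as in Theorem~\ref{thm:ewmult}, so $m_G(\ew_k^{(j)})\ge p_j(G)-q_j(G)$ for every $k=0,\ldots,j-1$; the conjecture then follows by taking $s$ with $\ew_s(G)=\ew_k^{(j)}$ for any one chosen $k$. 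The main obstacle is really bookkeeping --- one must carefully verify that the degree entries of $L(G)$ restricted to the branch genuinely match the diagonal $(2,\ldots,2,1)$ of $Q_j$, which uses the ``trivial intermediate'' clause of the definition of pendant $P_j$ vertex to forbid any stray edges onto the interior of the branch. A secondary and harder question not required by the conjecture but natural in view of the $m_G(\lm)=m_G(\lp)$ half of Theorem~\ref{thm:ewmult} is whether all $j$ of the $\ew_k^{(j)}$ share one common multiplicity; the Galois-theoretic closing argument of Theorem~\ref{thm:ewmult} delivers this only within irreducible orbits of the characteristic polynomial of $Q_j$, and this polynomial is irreducible over $\Q$ for $j=1,2,3$ but factors for $j=4$ (where $\cos(\pi/3)=\tfrac{1}{2}$ contributes the rational eigenvalue $1$), so the universal conclusion one can hope for beyond the conjecture is orbit-wise equality of multiplicities rather than one multiplicity shared by all $j$ eigenvalues.
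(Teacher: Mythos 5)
Your proposal is correct, and it takes a genuinely different (and much stronger) route than the paper: the paper deliberately leaves this statement as a conjecture and supports it only by checking the starlike trees $S(k.m)$, where $p_j(G)-q_j(G)$ is $0$ for $j<m$ and $k-1$ for $j=m$, the latter bound coming from Das's Lemma \citep{DAS-2007} rather than from any new argument. You instead extend the explicit eigenvector construction \eqref{eqn:ev-ell-j} of Theorem~\ref{thm:ewmult} from the $2\times 2$ block $Q$ to the $j\times j$ path block $Q_j$. Your boundary-condition computation giving the spectrum $2-2\cos\frac{(2k+1)\pi}{2j+1}$, $k=0,\dots,j-1$, is right (and consistent with Das's values $2+2\cos\frac{p\pi}{2m+1}$, $p$ even, after $p\mapsto 2m+1-p$); the row-by-row verification works precisely because every intermediate vertex of a pendant $P_j$ branch has degree exactly $2$ and the tip has degree $1$, so the branch has no edges into the rest of $G$ except through the neighbor $u$, the two active branches cancel at $u$, and disjointness of supports across distinct neighbors together with independence of the $c_j(u)-1$ vectors at each $u$ yields $m_G\bigl(2-2\cos\frac{(2k+1)\pi}{2j+1}\bigr)\ge \sum_u (c_j(u)-1)=p_j(G)-q_j(G)$ for every $k$, which settles the conjecture (trivially so when $p_j(G)-q_j(G)\le 0$). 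Two points worth making explicit in a write-up: you are using the intended reading of the paper's definition, namely that the length-$j$ path in condition 3) passes through $v$ (this reading is forced by consistency with the $j=2$ case, where the pendant $P_2$ vertices are exactly $\vtoi$), and under it distinct pendant $P_j$ branches are automatically vertex-disjoint because all their non-tip vertices have degree $2$. What the paper's approach buys is only a quick citation covering the starlike case; what yours buys is a full proof that also identifies concrete eigenvalues achieving the bound --- simultaneously for all $j$ of them, stronger than the ``for some $s$'' asserted. Your closing caveat is likewise accurate: the Galois argument that gave $m_G(\lm)=m_G(\lp)$ only forces equal multiplicities within irreducible factors of the characteristic polynomial of $Q_j$, which is irreducible over $\Q$ for $j\le 3$ but acquires the rational eigenvalue $1$ at $j=4$, so only orbit-wise equality can be expected in general.
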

This conjecture is certainly true if $G(V,E)=S(k.m)$ where $1 \leq j \leq m$.
This is because we can easily show from the definitions:
$$
p_j(G) = \begin{cases}
0 & \text{if $1 \leq j < m$;} \\
k & \text{if $j = m$}
\end{cases}
\quad \text{and} \quad
q_j(G) = \begin{cases}
0 & \text{if $1 \leq j < m$;} \\
1 & \text{if $j = m$}
\end{cases}
$$
imply
$$
p_j(G)-q_j(G) = \begin{cases}
0 & \text{if $1 \leq j < m$;} \\
k-1 & \text{if $j = m$}.
\end{cases}
$$
Since $m_G(\lambda_s(G)) \geq 0$ for every $s \in \{0, 1, \ldots, n-1\}$,
the former case of $p_j(G)-q_j(G) = 0 \leq m_G(\lambda_s(G))$ 
is certainly true.  If $j=m$, then there are $m$ eigenvalues 
with multiplicity $k-1$ as Lemma~3.1 of Das \citep{DAS-2007} shows.
Hence, $p_j(G)-q_j(G) = k-1 \leq m_G(\lambda_s(G))$ also holds for certain
$s \in \{0, 1, \ldots, n-1\}$.


\begin{hide}
Finally, according to our collaborator Yuji Nakatsukasa, there has been work on simultaneously occurring eigenvalues, e.g., in our case $\lm$ and $\lp$ occured simultaneously with equal multiplicities in our experiments.  We expect that this generalization can be explained and could have some future research potential.
\end{hide}

\section*{Acknowledgments}
N.\ S.\ would like to thank Prof.\ Monica Vazirani of UC Davis for explaining
the basics of the roots of a monic polynomial with integer coefficients.
This research was partially supported by the following grants from the Office
of Naval Research: N00014-09-1-0041; N00014-09-1-0318; N00014-12-1-0177
as well as the National Science Foundation: DMS-0636297; DMS-1418799.



\end{document}